%
%
%
%
%
\documentclass[a4paper,10pt]{article}
\usepackage{amsmath,amssymb,graphicx}

\usepackage[pdftex]{hyperref}

\setlength{\headheight}{0mm}
\setlength{\oddsidemargin}{-0mm}
\setlength{\topmargin}{-15mm}        
\setlength{\textwidth}{160mm}
\setlength{\textheight}{220mm}

\renewcommand{\title}[1]{\vspace{\fill}
\eject\addtolength{\baselineskip}{4pt}
{\bfseries\LARGE #1}\\[3mm]\addtolength{\baselineskip}{-4pt}}
\renewcommand{\author}[3]{\parbox[t]{75mm}
{\begin{center}{\scshape #1}\\[3mm] #2\\
 {\ttfamily #3} \end{center}}}
\setlength{\partopsep}{2pt plus 1pt minus 1pt}
\setlength{\topsep}{5pt  plus 1pt minus 1pt}
\setlength{\parsep}{2pt plus 1pt minus 1pt}
\setlength{\itemsep}{0pt plus 1pt}

\newtheorem{thm}{\bfseries Theorem}
\newtheorem{lem}[thm]{\bfseries Lemma}        
\newtheorem{remark}[thm]{\bfseries Remark}    
\newtheorem{prop}[thm]{\bfseries Proposition} 
\newtheorem{cor}[thm]{\bfseries Corollary}     
\newtheorem{defn}[thm]{\bfseries Definition}

\newenvironment{proof}{\medskip                    
\noindent{\scshape Proof:}}{\quad $\Box$\medskip}  

\def\qed{
\hfill $\Box$}

\DeclareMathOperator{\bp}{bp}
\DeclareMathOperator{\maxdc}{maxdc}

\begin{document}

\begin{center}

\title{Notes on dual-critical graphs} 
\author{Zolt\'an Kir\'aly\footnotemark[1]  
}{
Department of Computer Science and \\
 Egerv\'ary Research Group (MTA-ELTE)    \\
E\"otv\"os University \\
P\'azm\'any P\'eter s\'et\'any 1/C, Budapest, Hungary
}{
kiraly@cs.elte.hu
}\footnotetext[1]{Research was supported by 
grants (no.\ CNK 77780 and no.\ CK 80124) from the National Development
 Agency of Hungary, based on a source from the Research and Technology 
Innovation Fund.}
\author{
S\'andor Kisfaludi-Bak
}{
Department of Computer Science \\
E\"otv\"os University \\
P\'azm\'any P\'eter s\'et\'any 1/C, Budapest, Hungary
}{
kbsandor@cs.elte.hu
}


\end{center}


\begin{quote}
{\bfseries Abstract:}
  We define dual-critical graphs as graphs having an acyclic
  orientation, where the indegrees are odd except for the unique source. We have
  very limited knowledge about the complexity of dual-criticality testing. 
  By the definition the problem is in NP, and a result of Bal\'azs
  and Christian Szegedy \cite{szegedyszegedy} provides a randomized polynomial
  algorithm,
  which relies on formal matrix rank computing. It is unknown whether
  dual-criticality test can be done in deterministic polynomial
  time. Moreover, the question of being in co-NP is also open.

  We give equivalent descriptions for dual-critical graphs in the general case, and
  further equivalent descriptions in the special cases of planar graphs and 3-regular
  graphs. These descriptions provide polynomial algorithms for these special classes.
  We also give an FPT algorithm for a relaxed version of dual-criticality called
  $k$-dual-criticality.
\end{quote}

\begin{quote}
{\bf  Keywords: dual-critical, factor-critical, ear decomposition, FPT algorithm}
\end{quote}
\vspace{3mm}

\section{Introduction} 

The name and definition of dual-critical graphs was introduced
by Andr\'as Frank \cite{PersComm} based on the paper of Szegedy and
Szegedy \cite{szegedyszegedy}. He showed that this class is very
interesting due to the following reason. Let $G(V,E)$ be a simple
graph. We denote by $e_G(X)$ the number of incident edges for any
set $X\subseteq V$. The function $e_G$ defines a polymatroid $\mathcal{P}_G$
on $V$. A graph is \emph{dual-critical}, if and only if $\mathcal{P}_G$ has
a vertex $x$ such that all coordinates of $x$ are odd.

An orientation $\overrightarrow{G}=(V,\overrightarrow{E})$ of a graph
$G$ is called \emph{acyclic} if it does not have directed cycles. A graph
$G$ is \emph{dual-critical} if it has an acyclic orientation such that all
vertices except one have an odd indegree.

This definition might look factitious, but as we will see later,
when looking for graphs with certain parity constrained acyclic orientations,
it is always possible to reduce the problem to the dual-criticality of a
slightly altered graph.

In every acyclic orientation there is a source vertex $v$. It has an even
indegree (0). Thus an orientation of a dual critical graph that satisfies
the above conditions has exactly one vertex (the source) that has an even
indegree, and $|V(G)|-1$ vertices that have odd indegrees.
Consequently a dual-critical graph is always loopless and connected. 

It is a known fact that acyclic graphs have a \emph{topological ordering}. In a
topological ordering the source vertex comes first. The orientation of the
edges is determined by the order of their endpoints: the source of an arc
always precedes its target. Let us take a topological ordering of an
orientation described above. Beginning with the second vertex, every vertex
has an odd number of predecessors to which it is connected. Consequently, the
class of dual-critical graphs can be characterized as the graphs that can be
built by taking a single vertex, and adding new vertices connected to the
previous ones by an odd number of edges. This description shows that the problem
is in NP. Such an ordering will be called a \emph{good ordering}, and the
orientation defined by a good ordering will be called a \emph{good orientation}.

A directed graph is \emph{rooted connected} if it has a vertex $r$ called root,
from which there is a path to any other vertex.

\begin{remark} \label{rooted conn} It is easy to see that a good orientation
  of a dual-critical graph is rooted connected, with the source vertex as
  root. Indeed, except the first vertex (the source or the root) in the good
  ordering, every vertex has at least one incoming arc. So one could construct
  a backward path from any vertex to the root by going backwards on incoming
  arcs.
\end{remark}

Organization of the paper: The second section introduces the basic properties of dual-critical graphs. We also present a related class called super-dual-critical graphs. In Section \ref{sec_4} we deal with the background of the terminology, which lies in planar dual-critical graphs. Section \ref{sec_5} describes the randomized algorithm of Bal\'azs Szegedy and Christian Szegedy. The next section deals with 3-regular graphs. The main theorem of this section shows that dual-critical graphs coincide with many graph classes when restricted to 3-regular graphs. One of these classes is upper-embeddable graphs, providing us a deterministic polynomial algorithm for the 3-regular case. In the final section we define $k$-dual-criticality. Andr\'as Frank asked the original question whether an FPT algorithm for testing $k$-dual-criticality can be given. We present the first FPT algorithm for this problem.

With the exception of Section 7, many proofs are omitted, but all of these proofs can be found in \cite{techreport}.

\section{Basic properties of dual-critical graphs}

\textbf{Notation:} For a subset $X$ of vertices, $i(X)$ denotes the number of
edges induced by $X$. The symmetric difference of $X$ and $Y$ is denoted by
$X\oplus Y$.  When writing congruences, the notation of the modulus will be
omitted if it is $2$, e.g., $a \equiv b$ means that
$a$ and $b$ have the same parity.

If $|V| \not \equiv |E|$ holds for a graph $G(V,E)$, we say that $G$ has \textit{good parity}, otherwise, if $|V| \equiv |E|$, we say that $G$ has \textit{bad parity}.


\begin{defn}[$T$-odd]
  Let $T\subseteq V$. An orientation of a graph $G=(V,E)$ is $T$-odd if all
  vertices in $T$ have odd indegree, and all vertices in $V-T$ have even
  indegree.
\end{defn}

\begin{thm} \label{dualkrit ekv}
The following statements are equivalent for any graph $G=(V,E)$:
\begin{enumerate}
\item[(1)] $G$ is dual-critical.
\item[(2)] For any given $v \in V(G)$ there is an acyclic orientation, in
  which all the indegrees are odd except for $v$.
\item[(3)] The graph has good parity and for every set $T\subsetneq V$ with
  $|T|\equiv|E|$, there exists a $T$-odd acyclic orientation.
\item[(4)] Either $G$ is the graph consisting of one vertex, or it has a
  two-class partition $V=A \cup B$, such that $G[A]$ and $G[B]$ are
  dual-critical, and the cut $E(A,B)$ defined by $A$ and $B$ has an odd number
  of edges (i.e., $d(A,B)=|E(A,B)|$ is odd).
\end{enumerate}
\end{thm}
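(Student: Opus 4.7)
The plan is to prove all four statements equivalent simultaneously by strong induction on $|V|$, closing the cycle $(1) \Rightarrow (4) \Rightarrow (3) \Rightarrow (2) \Rightarrow (1)$. The base case $|V|=1$ is immediate for all four statements. Two of the four arrows are essentially free: $(3) \Rightarrow (2)$ follows by applying (3) with $T = V \setminus \{v\}$, whose size is $|V|-1 \equiv |E|$ by the good parity contained in (3); and $(2) \Rightarrow (1)$ by choosing any $v$.

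For $(1) \Rightarrow (4)$ I would extract the partition directly from any good ordering $v_1, \ldots, v_n$ of $G$: set $B = \{v_n\}$ and $A = V \setminus \{v_n\}$. The prefix $v_1, \ldots, v_{n-1}$ is a good ordering of $G[A]$, so $G[A]$ is dual-critical; $B$ is trivially dual-critical as a singleton; and every edge incident to $v_n$ is a predecessor edge of the good ordering, so $|E(A,B)| = \deg(v_n)$ equals $v_n$'s odd predecessor count.

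The main work lies in $(4) \Rightarrow (3)$. Given $T \subsetneq V$ with $|T| \equiv |E|$, set $T_A = T \cap A$ and $T_B = T \cap B$, and choose the cut orientation by a parity dichotomy: if $|T_A| \equiv |E(G[A])|$, orient every cut edge from $A$ to $B$; otherwise orient every cut edge from $B$ to $A$. In the first case the induction hypothesis (3) is applied to $G[A]$ with target $T_A$ and to $G[B]$ with target $T_B' := T_B \oplus \{\, b \in B : c_b \text{ is odd}\,\}$, where $c_b$ denotes the cut-degree of $b$; the second case is symmetric. A short parity computation using $|T| \equiv |E|$ and the oddness of $|E(A,B)|$ shows that in either case the induced targets satisfy the size congruences $|T_A| \equiv |E(G[A])|$ and $|T_B'| \equiv |E(G[B])|$ required by (3); strictness of the containments in $A$ resp.\ $B$ is forced by the good parity of each part, which would be violated if an induced target equalled its whole part. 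The induction hypothesis then yields acyclic orientations of $G[A]$ and $G[B]$ with the prescribed parities, and pasting them along the combined topological order---with $A$ entirely before $B$, or the reverse---produces the desired acyclic $T$-odd orientation of $G$, whose per-vertex indegree parities telescope to exactly $T$. The good parity of $G$ itself, also required by (3), drops out of the good parities of both parts plus the oddness of the cut.

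The principal obstacle is precisely this parity bookkeeping of $(4) \Rightarrow (3)$: one must arrange that the induction hypothesis fires on both $G[A]$ and $G[B]$ at once. The odd cut contributes exactly one bit of parity freedom, and the dichotomy on $|T_A|$ vs.\ $|E(G[A])|$ is what routes this bit to whichever side can absorb it; once the case split is chosen, the remaining verifications---strictness of the induced targets and the cancellation of the cut contribution in the pasted orientation---are routine.
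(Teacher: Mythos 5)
Your proposal is correct and follows essentially the same route as the paper: the same cycle $(1)\Rightarrow(4)\Rightarrow(3)\Rightarrow(2)\Rightarrow(1)$, the same singleton split for $(1)\Rightarrow(4)$, and the same parity dichotomy with cut-degree correction $T_B\oplus\{b: c_b \text{ odd}\}$ for $(4)\Rightarrow(3)$ (the paper phrases your case split as a ``wlog'' by swapping $A$ and $B$). Your explicit remarks on the strictness $T_A\subsetneq A$, $T_B'\subsetneq B$ and on the simultaneous-induction structure are details the paper leaves implicit, but the argument is the same.
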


\noindent{\scshape Proof:}
\begin{description}
\item[(1)$\Rightarrow$(4)] Take a good ordering of $G$. Let $w$ be the last
  vertex in the ordering. Choose $A=V(G)-w$, $B=\{w\}$.

\item[(4)$\Rightarrow$(3)] We use induction on the number of vertices. If $|T
  \cap A| \not\equiv i(A)$ and $|T \cap B| \not\equiv i(B)$, then $|T|
  \not\equiv i(A)+i(B)+d(A,B)=|E|$, a contradiction. Wlog.\ one can suppose
  $i(A) \equiv |T \cap A|$. By induction, as $G[A]$ is dual-critical, we can
  take an acyclic orientation of $G[A]$ in which the vertices of $T \cap A$
  have an odd indegree, and the vertices of $A-T$ have an even
  indegree. Direct the edges of $E(A,B)$ towards $B$.

  As $|T \cap A| \equiv i(A)$ we have that $|T \cap B| \not\equiv i(B)$. Let
  $Z$ be the set of vertices in $B$ that have an odd number of incoming edges
  from $A$, and let $T'=(T\cap B)\oplus Z$.  As $|Z|$ is odd, $|T'| \equiv
  i(B)$.

  Now one can use induction for $G[B]$ and $T'$, and fix an acyclic $T'$-odd
  orientation of $G[B]$. It is easy to check that the resulting orientation of
  $G$ is also acyclic and moreover it is $T$-odd.

\item[(3)$\Rightarrow$(2)] We may use (3) for $T=V(G)-v$.

\item[(2)$\Rightarrow$(1)] Obvious. \quad $\Box$
\end{description}

\begin{prop}\label{makesimple}
  The following operations do not change dual-criticality, i.e., a graph is
  dual-critical, if and only if using any of these operations results in a
  dual-critical graph:
\begin{enumerate}
\item[(1)] Deletion of two parallel edges,
\item[(2)] Insertion of two parallel edges between two arbitrary vertices,
\item[(3)] Division of an edge by adding a vertex in the middle,
\item[(4)] Contraction of an edge that has an endvertex with degree 2.
\end{enumerate}
\end{prop}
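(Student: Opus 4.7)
The plan is to handle (1)--(2) together and (3)--(4) together, since each pair consists of mutually inverse operations.

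For (1) and (2): in any acyclic orientation, two parallel edges between $u$ and $v$ must point in the same direction (otherwise they would form a directed $2$-cycle). Hence a topological ordering of $G$ is also a topological ordering of the graph $G'$ obtained by adding or removing such a parallel pair, with the pair oriented from the earlier endpoint to the later one. Adding or removing this pair changes the indegree of exactly one vertex by $\pm 2$, preserving its parity, so good orientations correspond to one another between $G$ and $G'$.

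For (3) and (4): operation (4) is exactly the inverse of (3), so it suffices to prove both directions of the equivalence for (3). Let $G'$ be obtained from $G$ by subdividing an edge $uv$ with a new degree-$2$ vertex $w$. If $G$ is dual-critical, take a good ordering in which $u$ precedes $v$ and insert $w$ immediately after $u$: the indegree of $v$ is unchanged (the incoming edge $uv$ is replaced by an incoming edge $wv$) and $w$ has indegree $1$, yielding a good ordering of $G'$. Conversely, if $G'$ is dual-critical, use Theorem \ref{dualkrit ekv}(2) to choose a good ordering of $G'$ whose source is $u$. Then $u < w$, so $uw$ is incoming to $w$; since $w$ has degree $2$ and odd indegree, $wv$ must be outgoing, forcing $u < w < v$. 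Deleting $w$ yields a good ordering of $G$: $u$ trades an outgoing $uw$ for an outgoing $uv$ and $v$ trades an incoming $wv$ for an incoming $uv$, so both indegree parities are preserved, and no other vertex is affected.

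The main obstacle is the reverse direction of (3). Naively deleting $w$ from an arbitrary good ordering of $G'$ can fail: if $w$ happens to be the source of $G'$, then both of its edges are outgoing, so $w < u$ and $w < v$, and removing $w$ would leave one of $u,v$ with an unexplained loss of an incoming edge (and thus the wrong parity after adding back the edge $uv$). Fixing the source at $u$ via Theorem \ref{dualkrit ekv}(2) eliminates this pathology by forcing $w$ strictly between $u$ and $v$ in the ordering, which makes the deletion clean.
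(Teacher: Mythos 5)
Your proof is correct and complete, and it uses the natural direct manipulation of good orderings that the paper itself relies on (the paper omits this proof, deferring to the technical report \cite{techreport}); in particular, invoking Theorem \ref{dualkrit ekv}(2) to force the source of $G'$ to be $u$, so that $w$ is wedged strictly between $u$ and $v$, is exactly the right way to make the contraction step clean. The only case you do not address is subdividing a loop (which the surrounding text explicitly allows): there $G$ is trivially not dual-critical, and one should add that the resulting pair of parallel edges at the degree-$2$ vertex $w$ forces $w$ to have even indegree in every acyclic orientation of $G'$, so $G'$ fails to be dual-critical as well and the equivalence still holds.
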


%

Using the operations (1) and (3) from Proposition \ref{makesimple} one can make any
graph simple by dividing loops by two vertices (so a triangle is made) and
eliminating parallel edge pairs. We might cut a connected graph this way,
but in that case Proposition \ref{makesimple} states that the original graph
was not dual-critical.

\begin{defn}[Super-dual-critical]
  A graph is called super-dual-critical, if for any vertex $v \in V(G)$, the
  graph $G-v$ is dual-critical.
\end{defn}

\begin{prop}\label{sdcprop1}
  In super-dual-critical graph either all vertex degrees are odd, or all vertex
  degrees are even. The degree parity is the same as the parity of
  $|E(G)|-|V(G)|$.
\end{prop}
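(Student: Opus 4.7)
My plan is to reduce the statement to the good-parity condition in Theorem~\ref{dualkrit ekv}(3): every dual-critical graph $H$ satisfies $|V(H)|\not\equiv|E(H)|\pmod 2$. By the super-dual-criticality hypothesis this condition holds for $H=G-v$ for every vertex $v$, and the whole proposition should drop out of comparing these conditions across different choices of $v$.

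Concretely, I would fix an arbitrary $v\in V(G)$, substitute $|V(G-v)|=|V(G)|-1$ and $|E(G-v)|=|E(G)|-d(v)$ into the incongruence $|V(G-v)|\not\equiv|E(G-v)|$, and rearrange modulo $2$ to obtain $d(v)\equiv|V(G)|+|E(G)|\equiv|E(G)|-|V(G)|\pmod 2$. Since the right-hand side does not depend on $v$, all vertex degrees share the same parity, and that common parity is precisely the parity of $|E(G)|-|V(G)|$, which is exactly what the proposition asserts.

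I expect no serious obstacle: the argument is pure parity bookkeeping. The only non-trivial input is the good-parity half of Theorem~\ref{dualkrit ekv}(3), which is itself a one-line consequence of the definition of dual-criticality (in a good orientation the indegrees sum to $|E|$ while exactly $|V|-1$ of them are odd, forcing $|E|\equiv|V|-1$). No induction, case analysis, or explicit orientation construction is required.
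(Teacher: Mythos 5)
Your proposal is correct and is essentially identical to the paper's proof: both deduce from dual-criticality of $G-v$ that $|V(G)|-1\not\equiv|E(G)|-d(v)$, and rearrange to get $d(v)\equiv|E(G)|-|V(G)|$ independently of $v$. No further comment is needed.
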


\begin{proof} 
For an arbitrary vertex $v$ the graph $G-v$ has good parity, thus
\begin{equation}
|V(G)|-1\not\equiv|E(G)|-d(v) \, \Rightarrow \, d(v) \equiv |E(G)|-|V(G)|. 
\end{equation}
\end{proof}

\begin{cor}
  A super-dual-critical graph is dual-critical if and only if it has good
  parity, or equivalently, a super-dual-critical graph is dual-critical if and
  only if every vertex has odd degree.
\end{cor}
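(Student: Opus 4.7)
The plan is to treat this as a direct corollary of Proposition~\ref{sdcprop1} and Theorem~\ref{dualkrit ekv}(4), with a short parity check to close the remaining direction. The two equivalences in the statement split naturally: one is purely about parity of degrees vs.\ good parity, and the other is about bridging super-dual-criticality to dual-criticality.

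First I would dispatch the second equivalence (good parity $\Leftrightarrow$ every vertex has odd degree) as a one-liner from Proposition~\ref{sdcprop1}. Good parity means $|E|-|V|$ is odd; Proposition~\ref{sdcprop1} says every vertex degree in a super-dual-critical graph has the same parity as $|E|-|V|$, so good parity is exactly the case when each vertex has odd degree.

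Next I would prove the forward implication: if $G$ is dual-critical, then $G$ has good parity. Take any good orientation; summing indegrees over all vertices gives $|E|$. There are $|V|-1$ odd indegrees and one even indegree ($0$, at the source), so $|E| \equiv |V|-1$, i.e., $|V|\not\equiv|E|$, which is good parity. This direction does not even use super-dual-criticality.

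Finally, the reverse direction: assume $G$ is super-dual-critical and has good parity. By the already-established parity equivalence, every vertex of $G$ has odd degree. Pick any vertex $v$ and set $A=V(G)-v$, $B=\{v\}$. Then $G[A]=G-v$ is dual-critical by the definition of super-dual-criticality, $G[B]$ is a single vertex (hence dual-critical), and $d(A,B)=d(v)$ is odd. Theorem~\ref{dualkrit ekv}(4) then gives that $G$ itself is dual-critical. There is no real obstacle here; the only thing to double-check is that $|V|\geq 2$ (if $|V|=1$ the statement is vacuous), so that the partition $A\cup B$ is well defined and $G-v$ is a legal object to invoke super-dual-criticality on.
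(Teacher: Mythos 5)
Your proposal is correct and follows essentially the same route as the paper: both reduce the degree/parity equivalence to Proposition~\ref{sdcprop1} and then obtain dual-criticality from super-dual-criticality by deleting and re-attaching a single odd-degree vertex (your appeal to Theorem~\ref{dualkrit ekv}(4) with $B=\{v\}$ is just a formalized version of the paper's ``add back a vertex of odd degree'' step). The only cosmetic difference is that you prove ``dual-critical $\Rightarrow$ good parity'' by summing indegrees, whereas the paper phrases the same fact contrapositively via all degrees being even.
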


\begin{proof} Let $G$ be a super-dual-critical graph. If it has bad parity, then by
Proposition \ref{sdcprop1} all degrees are even, so it cannot be
dual-critical. If $G$ has good parity, then all degrees are odd. Delete an
arbitrary vertex $v$. The graph $G-v$ is dual-critical, hence $G$ is
dual-critical as well, since it can be obtained from $G-v$ by adding a vertex
that has odd degree.  \end{proof}

\begin{prop}
  The graph $G$ has a $T$-odd acyclic orientation for every $T \subsetneq
  V(G)$ for which $|T|\equiv|E(G)|$ if and only if $G$ is dual-critical or $G$
  is super-dual-critical.
\end{prop}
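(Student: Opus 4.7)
My plan is to prove both directions of the equivalence, leaning on Theorem~\ref{dualkrit ekv} and Proposition~\ref{sdcprop1} throughout. For the reverse direction, if $G$ is dual-critical the property follows directly from condition~(3) of Theorem~\ref{dualkrit ekv}. If $G$ is super-dual-critical but not dual-critical, the corollary of Proposition~\ref{sdcprop1} forces $G$ to have bad parity with all degrees even. Given a valid $T\subsetneq V$, I would pick $v\notin T$ (possible since $T$ is proper), apply Theorem~\ref{dualkrit ekv}~(3) to the dual-critical graph $G-v$ (which has good parity since $d(v)$ is even, and for which $T$ is still a proper subset with $|T|\equiv |E(G-v)|$) to obtain a $T$-odd acyclic orientation of $G-v$, and then extend to $G$ by orienting every edge incident to $v$ into $v$; this keeps the orientation acyclic (place $v$ at the end of the topological order) and, since $d(v)$ is even, gives $v$ even indegree so the extension is $T$-odd on $G$.

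For the forward direction, suppose $G$ satisfies the property. In the good-parity case, applying the assumption to $T=V\setminus\{v\}$ (which has $|T|\equiv|E|$) immediately yields a good orientation with source $v$, so $G$ is dual-critical by condition~(2) of Theorem~\ref{dualkrit ekv}. The bad-parity case is the substantive one, and I would split it into two steps. Step~1 shows that every degree is even: if $O=\{v:d(v)\text{ odd}\}$ were nonempty, then by the handshake lemma $|O|$ is even, so $T:=V\setminus O$ would be a proper subset with $|T|\equiv|V|\equiv|E|$, and the assumption forces a $T$-odd acyclic orientation. But any sink $t$ of an acyclic orientation satisfies $d^-(t)=d(t)$, so the $T$-odd condition gives $t\in T\Leftrightarrow d(t)\text{ odd}\Leftrightarrow t\in O$; combined with $T=V\setminus O$ this yields $t\in T\Leftrightarrow t\notin T$, a contradiction. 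So no sink exists, which is impossible, and hence $O=\emptyset$.

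Step~2 then shows, for each $v$, that $G-v$ is dual-critical by constructing for every $u\neq v$ a good orientation of $G-v$ with source $u$. I would apply the assumption to $T=V\setminus\{u,v\}$ (valid since $|V|-2\equiv|E|$) to get a $T$-odd acyclic orientation $\sigma$ of $G$. In $\sigma$ only $u$ and $v$ have even indegree, so the source of $\sigma$ lies in $\{u,v\}$; moreover, since all degrees are even by Step~1, any sink $t$ has $d^-(t)=d(t)$ even and so also lies in $\{u,v\}$. With $G$ connected the source and sink cannot coincide (else the common vertex is isolated), so $\{\text{source},\text{sink}\}=\{u,v\}$. If the source is $u$ and the sink is $v$, then removing $v$ leaves the indegrees of all other vertices unchanged and produces the desired good orientation of $G-v$ with source $u$; otherwise I would replace $\sigma$ by $\sigma^{-1}$, which remains acyclic and (because all degrees are even) still $T$-odd, while interchanging source and sink, reducing to the previous case. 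The main obstacle is Step~1: the crucial insight is that $T=V\setminus O$ is always unrealizable by any acyclic orientation via the sink-parity obstruction, and this forces every degree to be even in the bad-parity case.
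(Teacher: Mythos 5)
The paper itself omits the proof of this proposition (deferring it to the technical report), so there is no in-paper argument to compare against line by line; judged on its own terms, your proof is correct and uses exactly the tools the paper provides (Theorem \ref{dualkrit ekv}(2)--(3), Proposition \ref{sdcprop1} and its corollary). The sink-parity obstruction in Step 1 is the right key idea for the bad-parity case, and the two auxiliary tricks are sound: appending $v$ as a sink preserves acyclicity and, since $d(v)$ is even, preserves $T$-oddness; reversing an acyclic orientation of a graph with all degrees even preserves all indegree parities. The one place you should tighten is the clause ``with $G$ connected'' in Step 2: connectivity is not part of the hypothesis and must be derived from it. It does follow: if some component $C$ had at least two vertices while another nonempty component $C'$ existed, then taking $u,v\in C$ and $T=V\setminus\{u,v\}$ would force a source of $C'$ (indegree $0$) to lie in $T$ and hence to have odd indegree, a contradiction; and three or more singleton components fail for the same reason, so the only disconnected graph surviving the hypothesis is the two-isolated-vertex graph, which is super-dual-critical outright. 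As written, though, this connectivity is assumed rather than proved, and your ``source $\neq$ sink'' step silently relies on it. With that one sentence added (and the degenerate two-vertex case dispatched separately), the proof is complete.
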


Finally, the following is a proposition which shows that a question about
acyclic orientations with parity constraints can be viewed as dual-criticality
of a slightly changed graph.

\begin{prop}[Be\'ata Faller]
  A graph $G=(V,E)$ has a $T$-odd acyclic orientation for an arbitrary $T
  \subseteq V$, if and only if the graph $G'$ obtained by adding a vertex $v$
  and connecting it to all vertices in $V-T$ is dual-critical.
\end{prop}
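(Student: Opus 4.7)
The plan is to prove both directions by direct construction of the required orientations, using Theorem \ref{dualkrit ekv}(2) in the reverse direction to prescribe which vertex is the source.

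For the ``only if'' direction, I would begin with a $T$-odd acyclic orientation of $G$, orient every new edge incident to $v$ away from $v$, and prepend $v$ to any topological ordering of the orientation of $G$. Each vertex in $V-T$ gains one incoming arc from $v$, flipping its indegree from even to odd; vertices in $T$ keep their odd indegree. Every vertex other than $v$ then has odd indegree, and the combined orientation is acyclic since $v$ is a source, so $G'$ is dual-critical.

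For the ``if'' direction, I would apply Theorem \ref{dualkrit ekv}(2) to $G'$ at the vertex $v$, obtaining an acyclic orientation of $G'$ in which $v$ is the unique vertex of even indegree. Since every acyclic orientation has at least one source and $0$ is even, $v$ must be that source, so every edge between $v$ and $V-T$ is directed away from $v$. Deleting $v$ from this orientation leaves an acyclic orientation of $G$ in which each vertex of $V-T$ loses exactly one incoming arc (turning its indegree from odd to even), while vertices of $T$ keep their odd indegree. This is the desired $T$-odd acyclic orientation.

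The only mildly subtle step will be the invocation of Theorem \ref{dualkrit ekv}(2) in the reverse direction: without the freedom to choose which vertex plays the role of the source, a generic good orientation of $G'$ would place $v$ anywhere in the topological order and the edges at $v$ could point either way, so the parity computation when restricting back to $G$ would no longer go through cleanly. Choosing $v$ as the prescribed source makes the orientation of the bridging edges forced, and the rest is bookkeeping.
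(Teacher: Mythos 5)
Your proof is correct and follows essentially the same route as the paper: orient the new edges away from $v$ for the forward direction, and for the converse use Theorem \ref{dualkrit ekv}(2) to force $v$ to be the source of a good orientation of $G'$, then delete $v$. The parity bookkeeping and the observation that the unique even-indegree vertex must be the source are exactly the details the paper leaves implicit.
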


\begin{proof} If $G$ has such an orientation, then directing the edges away from $v$
will give a good orientation for $G'$. If $G'$ is dual-critical, then it has a
good orientation in which the source vertex is $v$. This orientation is
$T$-odd and acyclic in $G$.  \end{proof}

\section{Planar case, motivations}\label{sec_4}

Before talking about dual-critical graphs in more detail, some background
information should be provided about the term 'dual-critical'. In this section
we are going to use matroids. Ample introductory and advanced material can be
found on them. See e.g.,  \cite{Oxley}. In this section we also omit some basic 
definitions, see \cite{techreport}, and graphs may have parallel edges and loops.

There is a well-known result about factor-critical graphs and ear
decompositions.

\begin{thm}\cite{faktorkrit ekv fulfelbont}
  A graph is factor-critical if and only if it has an odd ear decomposition,
  i.e., an ear decomposition in which all ears have an odd number of edges.
\end{thm}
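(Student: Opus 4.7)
The proof splits into two directions. For the easier direction $(\Leftarrow)$, that an odd ear decomposition implies factor-criticality, I proceed by induction on the number of ears. The base case is a single vertex, which is vacuously factor-critical. For the inductive step, write $G = H \cup P$ where $H$ is factor-critical by the inductive hypothesis and $P = u_0 u_1 \cdots u_k$ is the last odd ear (so $k$ is odd) with $u_0, u_k \in V(H)$ and $u_1, \ldots, u_{k-1}$ new. Given any $v \in V(G)$, I construct a perfect matching of $G - v$ by cases. If $v \in V(H)$, use a perfect matching of $H - v$ and match the internal vertices of $P$ pairwise along $P$ via $u_1u_2, u_3u_4, \ldots, u_{k-2}u_{k-1}$, which is possible since there are $k-1$, an even number, of them. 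If $v = u_i$ is internal to $P$, the subpaths of $P - v$ have $i-1$ and $k-i-1$ edges summing to $k-2$, which is odd; a short parity check shows that matching the internal vertices of each subpath forces exactly one of $u_0, u_k$ into the matching, and factor-criticality of $H$ covers the remaining endpoint.

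For the harder direction $(\Rightarrow)$, that factor-criticality implies an odd ear decomposition, I use induction on $|V(G)|$. The base case $|V(G)|=1$ is trivial. For the inductive step, the plan is to exhibit a proper factor-critical subgraph $H \subsetneq G$ and an odd path or closed ear $P$ with endpoints in $V(H)$, internal vertices in $V(G)\setminus V(H)$, and $G = H \cup P$; by induction $H$ admits an odd ear decomposition, and appending $P$ completes one for $G$. To seed the inductive chain one starts with a single vertex $v_0$, using that $|V(G)|$ is odd (since $G - v_0$ has a perfect matching) and that $G$ cannot be bipartite, to locate an odd cycle through $v_0$ as the first ear.

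The main obstacle is the inductive lemma for $(\Rightarrow)$: given a proper factor-critical $H \subsetneq G$ with $G$ factor-critical, one must exhibit an odd $H$-$H$ path $P$ with internal vertices in $V(G) \setminus V(H)$ such that $H \cup P$ remains factor-critical. The standard approach is an $M$-alternating path argument: pick $v \in V(H)$ and a perfect matching $M$ of $G - v$, then trace an $M$-alternating walk leaving $V(H)$ and returning; the parity of any such walk back to $V(H)$ is forced to be odd because $v$ is the unique $M$-exposed vertex. A cleaner alternative is to contract $V(H)$ to a single vertex and argue that the quotient graph inherits factor-criticality, reducing to finding an odd closed ear at a single vertex in a smaller factor-critical graph. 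Either way, the delicate step is verifying that $H \cup P$ remains factor-critical, which requires combining matchings of $G$ with matchings of $H$ across the boundary between $V(H)$ and the interior of $P$ without losing coverage or introducing parity obstructions.
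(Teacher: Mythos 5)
This theorem is quoted by the paper from Lov\'asz's 1972 note and is not proved in the text, so there is no in-paper argument to compare against; I am judging your attempt on its own. Your $(\Leftarrow)$ direction is correct and complete in all essentials: the parity bookkeeping for an internal vertex $u_i$ (internal-vertex counts $i-1$ and $k-1-i$ summing to the odd number $k-2$, so exactly one endpoint of the ear gets absorbed into the path matching and factor-criticality of $H$ handles the rest) is exactly right, and it adapts to closed ears $u_0=u_k$ with the same computation.

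The $(\Rightarrow)$ direction, however, contains a genuine gap: the ``main obstacle'' you name is the entire content of the theorem, and you describe two possible strategies without carrying out either. Worse, you have misidentified where the difficulty lies. Verifying that $H\cup P$ ``remains factor-critical'' is free: if $H$ has an odd ear decomposition and $P$ is an odd ear, then $H\cup P$ has one too, hence is factor-critical by your own $(\Leftarrow)$ direction. The real work, which is absent, is producing the odd ear. Your one-sentence justification --- ``the parity of any such walk back to $V(H)$ is forced to be odd because $v$ is the unique $M$-exposed vertex'' --- is not an argument; a single $M$-alternating walk leaving and re-entering $V(H)$ has no forced parity. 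The standard proof takes a maximal subgraph $H$ with an odd ear decomposition, supposes $V(H)\neq V(G)$, picks an edge $ux$ with $u\in V(H)$, $x\notin V(H)$, and perfect matchings $M_u$ of $G-u$ and $M_x$ of $G-x$; the component of $M_u\oplus M_x$ containing $u$ is an alternating $u$--$x$ path of even length, and together with the edge $ux$ it forms an odd closed walk whose maximal segments outside $V(H)$ have a parity that forces at least one of them to be an odd ear (this last extraction step also needs care, since the path may re-enter $V(H)$ several times); the case $V(H)=V(G)$, $E(H)\neq E(G)$ is then handled by single-edge ears. Your ``cleaner alternative'' --- that contracting $V(H)$ preserves factor-criticality --- is true but is itself a nontrivial lemma (an odd number of matching edges may cross the boundary of $V(H)$, so the matching must be rerouted), so it cannot be invoked without proof. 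As it stands, the forward implication is a plan, not a proof.
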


We will use the following variant of this theorem.

\begin{thm}\label{faktorkritbuild}
  A graph is factor-critical if and only if it can be built from a single
  vertex using the following operations:
\begin{itemize}
\item Addition of an edge between two vertices or addition of a loop,
\item Division of an edge into path of length three using two new vertices.
\end{itemize}
\end{thm}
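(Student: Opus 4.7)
The plan is to reduce everything to the characterization stated just before: a graph is factor-critical if and only if it admits an odd ear decomposition. Granted that, it is enough to prove that the class of graphs constructible from a single vertex by operations (1) and (2) coincides with the class of graphs that have an odd ear decomposition. Both directions will be by induction, and each operation corresponds cleanly to a natural manipulation of ears.

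For the direction \emph{constructible $\Rightarrow$ odd ear decomposition}, I would maintain an odd ear decomposition as an invariant throughout the construction. The starting single vertex is the trivial decomposition $P_0$. Operation (1) appends a new ear of length $1$: an open ear if a non-loop edge is added between two existing vertices, a closed ear if a loop is added. Both are odd. For operation (2), the subdivided edge $e$ lies in some existing ear $P_i$ of odd length $\ell$; replacing $e$ by a path of length $3$ rewrites $P_i$ as an ear of length $\ell+2$, still odd, while every later ear $P_j$ keeps the same endpoints and internal vertices, hence remains a valid ear of the updated graph.

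For the converse, take an odd ear decomposition $P_0,P_1,\dots,P_k$ and simulate the ears one by one. If $P_i$ is an open ear of length $2t+1$ from $u$ to $v$, first use operation (1) to add the edge $uv$, then apply operation (2) a total of $t$ times, each time subdividing an edge of the growing $u$--$v$ path; this yields a $u$--$v$ path of length $2t+1$ with $2t$ new internal vertices, which can be identified with $P_i$. If $P_i$ is a closed ear of odd length $2t+1$ attached at a vertex $v$, first add a loop at $v$ by operation (1), then subdivide $t$ times to obtain a cycle of length $2t+1$ through $v$. In either case, the attachment vertices of all later ears $P_j$ ($j>i$) already exist, so the simulation proceeds inductively.

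The main obstacle is minor and essentially bookkeeping: one has to check that subdividing an edge inside an ear really does give a legal ear decomposition of the new graph (it does, because subdivision does not change the set of vertices where later ears are attached), and that in the converse simulation the two operations produce exactly the path or cycle prescribed by the ear. Neither step requires anything beyond the parity observation that $\ell$ and $\ell+2$ have the same parity, so no delicate computation is needed.
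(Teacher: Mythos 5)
Your proof is correct and takes essentially the route the paper intends: the theorem is explicitly introduced as a ``variant'' of the odd ear decomposition characterization, and the (omitted, outsourced to the technical report) argument is exactly this translation between the two building operations and odd ears --- operation (1) creating length-one ears and operation (2) growing an ear by two while preserving parity. Both directions are handled soundly, including the closed-ear/loop case needed for the first ear attached to the initial single vertex.
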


\begin{thm} \label{fcdualjadc} The planar dual of a planar factor-critical
  graph $G$ is always dual-critical. (So if there are multiple dual graphs
  depending on the planar embedding of $G$, then all of them are
  dual-critical.)
\end{thm}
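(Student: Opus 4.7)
The plan is to induct on a build-up of $G$ from a single vertex given by Theorem \ref{faktorkritbuild}, tracking at each step how the planar dual changes. First I would fix a planar embedding $\mathcal{E}$ of $G$ and, using an odd ear decomposition of $G$ compatible with $\mathcal{E}$, produce a sequence of operations of type (a) (add edge or loop) and (b) (subdivide an edge into a $3$-path) that builds $G$ from a single vertex, with each intermediate graph inheriting an embedding from $\mathcal{E}$: for each ear, I would first add a single edge (or a loop for a closed ear) between its endpoint(s) inside the appropriate face, and then apply (b) repeatedly to reach the desired odd length. Since the dual of the single-vertex starting graph is again a single vertex, hence trivially dual-critical, the theorem reduces to showing that each of the two primal operations translates to an operation on the dual that preserves dual-criticality.

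Operation (b) leaves the face set of the embedding unchanged and merely replaces the dual edge $e^*$ of the subdivided primal edge $e$ by three parallel edges between the same two dual vertices, which is one application of item (2) of Proposition \ref{makesimple}. Operation (a), adding an edge or a loop inside a face $f$, splits $f$ into two faces $f_1, f_2$; in the dual, the vertex $f$ becomes two new vertices $f_1, f_2$ joined by one new edge $e^*$, and the old dual neighbours of $f$ are distributed between $f_1$ and $f_2$ according to which side of the new primal edge they lie on. (When the primal operation is a loop, one side contains no old incidences, so $f_1$ becomes a pendant neighbour of $f_2$, which is an immediate sub-case.)

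The heart of the argument is showing that this general vertex-split preserves dual-criticality, and my plan here is an orientation-lifting argument. Let $H$ be the old dual and $H'$ the new dual. Starting from any good orientation of $H$, I would keep the orientations of the old edges in $H'$ and choose the orientation of $e^*$ as follows. If $f$ was the source of $H$, then both $f_1$ and $f_2$ inherit indegree $0$, so either orientation of $e^*$ yields a good orientation of $H'$ with the tail of $e^*$ as source. Otherwise $\mathrm{in}_H(f)$ is odd, so the inherited indegrees of $f_1$ and $f_2$ have different parities, and orienting $e^*$ into the vertex with even inherited indegree makes both $f_1, f_2$ odd while leaving all other indegrees unchanged. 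Acyclicity is the one remaining point: a directed cycle in $H'$ avoiding $e^*$ would already be one in $H$, and a cycle using $e^*$ would, after identifying $f_1, f_2$ back to $f$, yield a directed closed walk of positive length in the acyclic graph $H$; both are impossible. The delicate part is precisely this orientation-lifting step; the remaining ingredients are bookkeeping about ear decompositions respecting $\mathcal{E}$, and a direct appeal to Proposition \ref{makesimple}.
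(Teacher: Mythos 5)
Your proposal is correct and follows essentially the same route the paper intends: induct along the build-up of Theorem \ref{faktorkritbuild}, realized through an embedded odd ear decomposition, translating edge subdivision into the insertion of parallel dual edges (Proposition \ref{makesimple}) and the face-splitting step into a dual vertex split handled by lifting a good orientation and directing the new dual edge toward the part with even inherited indegree. The orientation-lifting argument and the acyclicity check are sound, so no gap remains.
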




\begin{prop} \label{szabdbefujni} Let $G$ be a factor-critical graph. The
  graph $G'$ that is obtained from $G$ by blowing an odd cycle into a vertex
  $v$ is factor-critical.
\end{prop}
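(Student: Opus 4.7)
The plan is to prove factor-criticality of $G'$ directly from the matching definition: for every $u \in V(G')$, I would exhibit a perfect matching of $G' - u$. Write the odd cycle as $C = v_0 v_1 \ldots v_{2k}$ of length $2k+1$, so that each edge of $G$ formerly incident to $v$ is reattached in $G'$ to a specific cycle vertex $v_i$. The key routine observation I will use throughout is that, since $|V(C)| = 2k+1$ is odd, for any $j$ the path $C - v_j$ has $2k$ vertices and hence admits a perfect matching using alternating cycle edges.

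The argument then splits into two cases according to where $u$ sits. If $u \in V(G) \setminus \{v\}$, I would invoke factor-criticality of $G$ to obtain a perfect matching $M$ of $G - u$, let $w$ be the $M$-partner of $v$, and locate the cycle vertex $v_i$ that inherited the edge $vw$ in the blow-up; then $(M \setminus \{vw\}) \cup \{v_i w\}$, augmented by a perfect matching of the even path $C - v_i$, gives a perfect matching of $G' - u$. If instead $u = v_j \in V(C)$, I would combine a perfect matching of $G - v$ (which exists because $G$ is factor-critical) with a perfect matching of the even path $C - v_j$; their disjoint union perfectly matches $V(G') \setminus \{v_j\}$.

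The main, and essentially only, obstacle is this parity count: both cases depend on $C$ minus a single vertex being perfectly matchable, which is exactly what an odd cycle provides. An even cycle would immediately break the second case, since the leftover path would have an odd number of vertices. Once this parity observation is in place, the two case analyses are routine and the result follows. A minor bookkeeping point I would confirm in passing is that the $M$-partner $w$ of $v$ exists and is distinct from $v$ (loops at $v$ are irrelevant since matchings use no loops), so the redirection $vw \mapsto v_i w$ is well defined regardless of any multi-edges or loops at $v$.
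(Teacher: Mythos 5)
Your proof is correct. The paper itself omits the proof of this proposition (it is deferred to the technical report \cite{techreport}), so there is nothing in the given source to compare against, but your direct verification of the matching definition is the standard and complete argument: the two cases ($u$ outside the cycle, using factor-criticality of $G$ and redirecting the matching edge at $v$ to the appropriate cycle vertex $v_i$; and $u=v_j$ on the cycle, using a perfect matching of $G-v$) together cover all vertices of $G'$, and in both cases the leftover path $C-v_i$ on $2k$ vertices is perfectly matchable precisely because the cycle is odd. You also correctly dispose of the degenerate issues (the $M$-partner $w$ of $v$ exists and is distinct from $v$ since matchings contain no loops, and parallel edges cause no harm since each former edge at $v$ is reattached to one definite cycle vertex). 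An alternative route more in the spirit of the surrounding section would go through odd ear decompositions (Theorem 9 / Remark 13), but that buys nothing here and is arguably less elementary than your matching argument.
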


\begin{thm} \label{dcdualjafc} The dual of a planar dual-critical graph $G$
  is always factor-critical. (If there are multiple dual graphs depending
  on the planar embedding of $G$, then all of them are factor-critical.)
\end{thm}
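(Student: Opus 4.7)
The plan is induction on $|V(G)|$. The base case $|V(G)|=1$ is trivial: $G$ has no edges and its planar dual is a single vertex, which is factor-critical.

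For the induction step, fix a planar embedding of $G$. Using the characterization of dual-critical graphs from the introduction (good orderings), I take a good ordering of $V(G)$ and let $w$ be its last vertex. Then $G - w$ is dual-critical and, via the restricted planar embedding obtained by removing $w$ and its incident edges, planar. By the induction hypothesis the dual $(G-w)^*$ (with respect to this restricted embedding) is factor-critical.

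The key step is to describe how $G^*$ arises from $(G-w)^*$. Let $k$ be the (odd) number of edges incident to $w$ in $G$. In the planar embedding, $w$ lies inside some face $F$ of $G-w$; the $k$ edges of $w$ are drawn inside the closure of $F$ and split $F$ into $k$ new faces $F_1,\dots,F_k$ of $G$, arranged cyclically around $w$. In the dual, passing from $(G-w)^*$ to $G^*$ therefore replaces the single vertex $F^*$ by a cycle of length $k$ on the vertices $F_1^*,\dots,F_k^*$ (whose cycle edges are the duals of the $k$ edges at $w$), while the dual edges of $(G-w)^*$ formerly incident to $F^*$ are distributed among the $F_i^*$ according to the embedding. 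This is exactly the operation of blowing up a vertex into an odd cycle covered by Proposition \ref{szabdbefujni}, which preserves factor-criticality; hence $G^*$ is factor-critical, completing the induction.

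The step that requires the most care is verifying that the $k$ faces $F_1,\dots,F_k$ are indeed distinct and form a $k$-cycle in the dual, even when $w$ has parallel edges or loops relative to $G-w$. The cleanest way to handle this is topological: since $w$ is a single point and its $k$ incident edges are arcs leaving $w$ in a well-defined cyclic order determined by the embedding, they locally divide a small disk around $w$ into $k$ sectors, and two adjacent sectors lie in different faces of $G$ because they are separated by an edge of $G$. This gives distinct $F_i$ and the desired cycle structure in the dual, so the induction goes through.
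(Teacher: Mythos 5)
Your proof is correct and follows the same route the paper intends: peel off the last vertex $w$ of a good ordering, apply induction to the planar dual of $G-w$, and observe that restoring $w$ (of odd degree $k$) corresponds in the dual to blowing the face-vertex $F^*$ up into an odd cycle, so Proposition \ref{szabdbefujni} finishes the argument. The one point to tighten is the distinctness of $F_1,\dots,F_k$: your local observation about adjacent sectors does not by itself rule out two non-adjacent sectors lying in the same face, but the Euler count you already invoke (adding $w$ and its $k$ edges to the connected graph $G-w$ creates exactly $k$ faces inside $F$, each of which must contain at least one of the $k$ sectors around $w$) gives the required bijection and hence the genuine $k$-cycle in the dual.
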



Since dualization and factor-criticality test can be done in polynomial time,
we arrive at the following corollary.

\begin{cor}
  There is a polynomial time algorithm for deciding dual-criticality if the
  graph is planar. \qed
\end{cor}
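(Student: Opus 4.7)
The plan is to combine Theorems \ref{fcdualjadc} and \ref{dcdualjafc} with the known polynomial-time solvability of planarity-related tasks and of factor-criticality testing. The two theorems together say that for any planar embedding of a planar graph $G$, the dual $G^*$ is factor-critical if and only if $G$ is dual-critical, so the entire problem reduces to a factor-criticality test on the dual.

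Concretely, I would proceed as follows. First, use any standard linear-time planarity testing algorithm (e.g.\ Hopcroft--Tarjan) to obtain a planar embedding of the input graph $G$; if $G$ is not planar the corollary is vacuous for that input. Next, construct the planar dual $G^*$ from the combinatorial embedding in polynomial time, treating faces as vertices and inserting one dual edge per primal edge. Finally, run a polynomial-time algorithm deciding factor-criticality of $G^*$ (for instance using the matching-based characterization or an odd ear decomposition test, as referenced earlier in the paper), and return the same answer for dual-criticality of $G$.

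Correctness is immediate from the two theorems cited: Theorem \ref{fcdualjadc} gives the ``if'' direction (factor-critical dual implies dual-critical primal, and this holds for every planar embedding), while Theorem \ref{dcdualjafc} gives the ``only if'' direction (dual-critical primal forces every dual to be factor-critical). Hence the particular embedding chosen in the first step does not matter for the outcome, which is what makes the algorithm well-defined. Each of the three stages (planar embedding, dualization, factor-criticality test) is polynomial, so their composition is polynomial, completing the proof.

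There is essentially no obstacle here beyond invoking the machinery already established: the substantive work is hidden in Theorems \ref{fcdualjadc} and \ref{dcdualjafc} and in the known tractability of factor-criticality. The only mild subtlety worth mentioning is the dependence on the embedding, but the ``for all duals'' formulation of both theorems removes that concern entirely.
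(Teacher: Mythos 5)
Your proposal is correct and matches the paper's (implicit) argument exactly: the paper justifies the corollary in one line by noting that dualization and factor-criticality testing are polynomial, relying on Theorems \ref{fcdualjadc} and \ref{dcdualjafc} for correctness, which is precisely your algorithm. The only detail worth a passing mention is that deducing ``$G^*$ factor-critical $\Rightarrow$ $G$ dual-critical'' from Theorem \ref{fcdualjadc} uses $(G^*)^*\cong G$ for connected planar graphs (and disconnected inputs are trivially not dual-critical), but the paper glosses over this too.
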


Theorems \ref{fcdualjadc} and \ref{dcdualjafc} show that
dual-criticality and factor-criticality are dual concepts. 
Our next goal is to show that Theorems \ref{fcdualjadc} and
\ref{dcdualjafc} can be generalized.

\begin{remark}\label{2offaktorkritbuild}
  It is known that a graph is 2-connected if and only if it has an open
  ear-decomposition, i.e., an ear decomposition where we begin with a cycle,
  and the two ends of an ear cannot coincide. This statement can be ported for
  factor-critical graphs. A 2-connected graph is factor-critical if and only
  if it can be obtained from an odd cycle by adding odd length open ears.
\end{remark}

Now we state a well-known result from matroid theory. A proof can be found in
section 2.3 in \cite{Oxley}.

\begin{prop} \label{dualdualja} The graphic matroid of the dual of a planar
  graph is isomorphic to the cographic matroid of the graph. (Equivalently:
  the dual graph's graphic matroid is the dual of the graph's graphic
  matroid.)
\end{prop}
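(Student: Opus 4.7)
The proof will hinge on the classical cycle--bond duality for plane graphs. Let $G$ be a connected plane graph with planar dual $G^*$; there is a canonical bijection $\varphi\colon E(G)\to E(G^*)$ sending each edge to the dual edge that crosses it. The circuits of the graphic matroid $M(G^*)$ are precisely the edge sets of simple cycles of $G^*$, while the circuits of the cographic matroid $M^*(G)$ are precisely the bonds (minimal non-empty edge cuts) of $G$. So it suffices to prove that $\varphi$ maps the simple cycles of $G^*$ bijectively onto the bonds of $G$; once the circuit families of the two matroids are identified on the common edge set, the isomorphism $M(G^*)\cong M^*(G)$ follows.

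\textbf{Key step (cycle $\leftrightarrow$ bond):} Take a simple cycle $C^*$ of $G^*$. As a subset of the plane, $C^*$ is a simple closed curve, so by the Jordan curve theorem it splits the plane into two open regions, inducing a partition $V(G)=V_{\mathrm{in}}\cup V_{\mathrm{out}}$ of the primal vertices according to which region each lies in. An edge $e\in E(G)$ satisfies $\varphi(e)\in C^*$ iff its two endpoints lie on opposite sides of $C^*$, hence $\varphi^{-1}(C^*)=\delta_G(V_{\mathrm{in}})$. To see that this cut is a bond one checks that both $G[V_{\mathrm{in}}]$ and $G[V_{\mathrm{out}}]$ are connected: any two vertices in the same region of $\mathbb{R}^2\setminus C^*$ can be joined there by a curve, and this curve can be rerouted as a walk along primal edges that avoid $C^*$. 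Conversely, given a bond $B=\delta_G(V_1)$ of $G$ with $V_2=V(G)\setminus V_1$, both $G[V_1]$ and $G[V_2]$ are connected by definition of a bond; the dual picture $\varphi(B)$ then separates the plane into the region containing $V_1$ and the region containing $V_2$, and minimality of $B$ forces $\varphi(B)$ to be a single simple closed curve rather than an edge-disjoint union of cycles. Hence $\varphi(B)$ is a simple cycle of $G^*$.

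\textbf{Main obstacle:} The substantive content is topological rather than matroid-theoretic. The delicate point is the connectedness claim for the two sides of a dual cycle: this is exactly where the Jordan curve theorem enters, and where one has to turn topological arcs in the plane into genuine walks in $G$ by using the fact that the faces of $G$ refine the regions of $\mathbb{R}^2\setminus C^*$. Dually, one must argue that the dual of a primal bond cannot break up into several disjoint cycles, which follows from minimality of the cut. Once these two topological facts are in hand, the cycle--bond bijection is complete and the matroid isomorphism follows immediately.
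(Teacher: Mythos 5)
The paper offers no proof of this proposition at all: it is introduced as "a well-known result from matroid theory" with a pointer to Section 2.3 of Oxley, so there is no in-paper argument to compare yours against. What you have written is the standard cycle--bond duality proof (essentially the one Oxley and Diestel give), and it is correct in structure: identify the circuits of $M(G^*)$ (cycles of $G^*$) with the circuits of $M^*(G)$ (bonds of $G$) via the canonical edge bijection, using the Jordan curve theorem for the forward direction. Your forward direction is fine, including the observation that an edge of $G$ crosses the closed curve $C^*$ exactly once iff its dual edge lies on $C^*$, and the (standard, mildly informal) rerouting of a plane arc into a primal walk to get connectedness of both sides.

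The one place you assert more than you argue is the converse: you say that minimality of the bond $B$ forces $\varphi(B)$ to be a single cycle "rather than an edge-disjoint union of cycles," but you never establish that $\varphi(B)$ is an edge-disjoint union of cycles in the first place. The missing (easy) step is that for any cut $\delta_G(V_1)$, every vertex of $G^*$ -- i.e.\ every face of $G$ -- is incident with an even number of edges of $\varphi(\delta_G(V_1))$, because the boundary walk of a face crosses between $V_1$ and $V_2$ an even number of times; hence $\varphi(\delta_G(V_1))$ lies in the cycle space of $G^*$ and decomposes into edge-disjoint cycles, after which your forward direction plus minimality does force a single cycle. You also silently assume $G$ is connected (so that "both sides connected" characterizes bonds); that is the right setting here, since the paper only applies the proposition to connected graphs, but it is worth stating.
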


\begin{defn}\cite{szegedyszegedy}
  A sequence of circuits $\{C_0,C_1, \dots C_k\}$ of the matroid
  $M=(S,\mathcal{F})$ is called an ear-decomposition if
\begin{enumerate}
\item[(1)] $C_i - (\bigcup_{j=0}^{i-1} C_j)$ is not empty for all $1 \leq i
  \leq k$
\item[(2)] $C_i \cap (\bigcup_{j=0}^{i-1} C_j)$ is not empty for all $1 \leq i
  \leq k$
\item[(3)] $C_i - (\bigcup_{j=0}^{i-1} C_j)$ is a circuit in
  $M/(\bigcup_{j=0}^{i-1} C_j)$ for all $1 \leq i \leq k$
\item[(4)] $\bigcup_{i=0}^k C_i=S$
\end{enumerate}
An ear is a set $C_i - (\bigcup_{j=0}^{i-1} C_j)$.
\end{defn}

This definition is the matroid equivalent of the open-ended ear-decomposition
which is described in Remark \ref{2offaktorkritbuild}. It follows that a
2-connected graph is factor-critical if and only if its graphic matroid has an
odd ear-decomposition.

We need two basic lemmas from matroid theory. The notation $M^*$ indicates the
dual matroid of $M$, and $M/Z$ is used for the contraction of the subset $Z$.

\begin{lem}(Theorem 8.3 in \cite{lawler}) Let $M$ be a matroid on the set
  $S$. Then for any $Z\subseteq S$ the following holds:
\begin{equation}
(M/Z)^*=M^*-Z \text{ and } M^*/Z=(M-Z)^*
\end{equation}
\end{lem}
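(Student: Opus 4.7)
The plan is to verify both identities by computing rank functions on the common ground set $S \setminus Z$, since two matroids on the same ground set coincide iff their rank functions agree. The three ingredients I will assume are the standard formulas
\[
r_{M/Z}(X) = r_M(X \cup Z) - r_M(Z), \qquad r_{M-Z}(X) = r_M(X),
\]
both for $X \subseteq S \setminus Z$, and the dual rank formula
\[
r_{M^*}(X) = |X| + r_M(S \setminus X) - r_M(S).
\]

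First I would prove the identity $(M/Z)^* = M^* - Z$ by a direct one-line substitution. Applying the dual formula to $M/Z$ on the ground set $S \setminus Z$ and expanding using the contraction formula gives, for $X \subseteq S \setminus Z$,
\[
r_{(M/Z)^*}(X) = |X| + r_{M/Z}((S \setminus Z) \setminus X) - r_{M/Z}(S \setminus Z),
\]
and after substituting $r_{M/Z}((S \setminus Z) \setminus X) = r_M(S \setminus X) - r_M(Z)$ and $r_{M/Z}(S \setminus Z) = r_M(S) - r_M(Z)$, the two $r_M(Z)$ terms cancel and what remains is precisely $r_{M^*}(X)$, which equals $r_{M^* - Z}(X)$ since deletion is restriction of the rank function. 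So the two matroids have the same rank function on $S \setminus Z$.

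Second, I would derive $M^*/Z = (M-Z)^*$ for free, by applying the already-proved identity to the matroid $M^*$ in place of $M$: that yields $(M^*/Z)^* = (M^*)^* - Z = M - Z$, and dualizing both sides gives $M^*/Z = (M - Z)^*$. Thus the two statements in the lemma are really one identity together with involution of duality.

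There is no serious obstacle here; the whole proof is a routine calculation. The only thing to take a little care with is keeping the ground sets straight when writing $r_{(M/Z)^*}$: the dual is taken with respect to the ground set $S \setminus Z$ of $M/Z$, so the ``complement'' inside the dual formula is $(S \setminus Z) \setminus X$ rather than $S \setminus X$, and it is the contraction formula which then rewrites this back in terms of $r_M$ evaluated at $S \setminus X$.
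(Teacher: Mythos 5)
Your proof is correct. The paper states this lemma without proof, simply citing Theorem 8.3 of Lawler, so there is nothing to compare against; your rank-function verification is the standard textbook argument, the cancellation of the two $r_M(Z)$ terms goes through exactly as you describe (using that $X \cap Z = \emptyset$ so that $((S\setminus Z)\setminus X)\cup Z = S\setminus X$), and deriving the second identity from the first by applying it to $M^*$ and using $(M^*)^*=M$ is exactly right.
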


\begin{lem}[Proposition 2.3.1 in \cite{Oxley}] \label{propcutlemma} Let
  $M=(E(G),\mathcal{F})$ be the graphic matroid of $G$. The set $Z\subseteq
  E(G)$ is a cycle in $M^*$ if and only if it is a proper cut in $G$.
\end{lem}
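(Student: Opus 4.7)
The plan is to unpack the two definitions and match them up through the rank formula for the graphic matroid. By construction, a circuit of $M^*$ is a cocircuit of $M$, that is, a minimal subset $Z \subseteq E(G)$ whose complement $E(G) \setminus Z$ contains no basis of $M$; equivalently, a minimal $Z$ with $r(E(G) \setminus Z) < r(E(G))$. I would identify these minimal corank-increasing sets with the bonds (minimal proper cuts) of $G$.

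First I would recall the rank formula for the graphic matroid: for any $F \subseteq E(G)$,
\begin{equation*}
r(F) = |V(G)| - c(V(G),F),
\end{equation*}
where $c(V(G),F)$ denotes the number of connected components of the spanning subgraph $(V(G), F)$. Consequently, $r(E(G) \setminus Z) < r(E(G))$ holds exactly when removing $Z$ strictly increases the number of components of $G$, i.e., when $Z$ contains the edge set of some proper cut of $G$.

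For the forward direction, suppose $Z$ is a proper cut of $G$, meaning a minimal edge set whose removal disconnects some component of $G$. Then $r(E(G) \setminus Z) < r(E(G))$ by the observation above, and for any $Z' \subsetneq Z$, minimality of the cut gives that $G-Z'$ has the same components as $G$, so $r(E(G) \setminus Z') = r(E(G))$. Hence $Z$ is a cocircuit of $M$, that is, a circuit of $M^*$. Conversely, if $Z$ is a circuit of $M^*$, then $G-Z$ has strictly more components than $G$, so $Z$ contains the edge set of some cut separating a component $C$ of $G$; minimality of $Z$ as a cocircuit then forces $Z$ to coincide with a minimal cut of $C$, hence a proper cut.

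The main obstacle is essentially just bookkeeping: the statement is a definition chase through the rank formula, with no deep combinatorial content. The only care needed is to verify that the minimality conditions on each side match up exactly, and to localize the argument to the component of $G$ that actually splits when $G$ is disconnected.
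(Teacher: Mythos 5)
Your proof is correct, but note that the paper itself gives no proof of this lemma: it is quoted verbatim as Proposition 2.3.1 from Oxley and used as a black box, so there is nothing to compare against. Your argument --- identifying circuits of $M^*$ with minimal rank-decreasing complements and then invoking the rank formula $r(F)=|V(G)|-c(V(G),F)$ --- is precisely the standard textbook derivation, and the minimality bookkeeping on both sides is handled correctly, including the observation that a rank-decreasing set must contain a bond. The one interpretive step you took, reading ``proper cut'' as a \emph{minimal} edge cut (bond) rather than an arbitrary cut $E(A,V\setminus A)$, is the right one: the paper defers this definition to its technical report, and the statement would be false under the broader reading (e.g.\ the four edges of $C_4$ across the bipartition $\{1,3\}$ versus $\{2,4\}$ form a cut that is a disjoint union of two cocircuits, not a single circuit of $M^*$). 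So the proposal is a complete and correct proof of a statement the paper only cites.
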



%

\begin{prop}
  A 2-connected graph is dual-critical if and only if its cographic matroid
  has an odd ear decomposition.
\end{prop}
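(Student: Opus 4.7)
The plan is to translate the statement via Lemma \ref{propcutlemma}, which identifies circuits of $M^*(G)$ with bonds of $G$: an odd ear decomposition of $M^*(G)$ is a sequence $C_0,\dots,C_k$ of bonds of $G$ satisfying the ear-decomposition axioms, with each ear $C_i \setminus \bigcup_{j<i} C_j$ of odd size. I would then apply the matroid identity $M^*(G)/Z = (M(G)-Z)^* = M^*(G\setminus Z)$ together with the fact that the cographic matroid of a disjoint union is the direct sum of cographic matroids. These two observations reduce ear decompositions of $M^*(G)$ to ear decompositions of the two components of $G\setminus C_0$ once the first bond is removed, which mirrors the recursive structure (4) of Theorem \ref{dualkrit ekv}.

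For the direction $(\Rightarrow)$, I take a 2-connected dual-critical $G$ and apply Theorem \ref{dualkrit ekv}(4) to obtain a partition $V = A\cup B$ with $G[A]$ and $G[B]$ dual-critical and $|E(A,B)|$ odd. Because both halves are connected (dual-criticality implies connectedness), $C_0 := E(A,B)$ is an odd bond of $G$. Contracting yields $M^*(G)/C_0 \cong M^*(G[A]) \oplus M^*(G[B])$. Recursively I construct odd ear decompositions of each summand and splice them after $C_0$; each summand-circuit lifts, by a standard matroid-contraction argument, to a bond of $G$ meeting $C_0$, supplying a valid $C_i$ in $M^*(G)$.

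For the direction $(\Leftarrow)$, given an odd ear decomposition $C_0,\dots,C_k$ of $M^*(G)$, the bond $C_0$ partitions $V(G) = A\cup B$ with $|E(A,B)| = |C_0|$ odd. Circuits of a direct sum lie in a single summand, so the ears $C_i \setminus \bigcup_{j<i} C_j$ for $i\geq 1$ split into two subsequences that themselves form odd ear decompositions of $M^*(G[A])$ and $M^*(G[B])$. By induction each half is dual-critical, and Theorem \ref{dualkrit ekv}(4) then gives that $G$ is dual-critical.

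The main obstacle I foresee is that the halves $G[A]$ and $G[B]$ need not be 2-connected even when $G$ is; yet the cographic matroid is matroid-connected only when the underlying graph is 2-connected, so an ``ear decomposition of $M^*$'' is not directly meaningful on the smaller pieces. The cleanest remedy is to strengthen the inductive hypothesis to a statement for all connected dual-critical graphs, allowing a blockwise notion of ear decomposition on each 2-connected block with matching parity conditions, or alternatively to refine Theorem \ref{dualkrit ekv}(4) so that, in the 2-connected case, at least one valid partition produces two 2-connected halves. Sorting out this bookkeeping is where I expect the real work to lie; the matroid and parity manipulations themselves should be routine.
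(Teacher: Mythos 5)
Your translation of the statement is set up correctly: by Lemma \ref{propcutlemma} the circuits of $M^*(G)$ are the bonds of $G$, and the identity $M^*(G)/Z=(M(G)-Z)^*$ together with the direct-sum decomposition over components does reduce contractions of the cographic matroid to cographic matroids of edge-deleted subgraphs; pairing this with Theorem \ref{dualkrit ekv}(4) is the right graph-side recursion. But the proof does not close, and the obstacle you name at the end is not bookkeeping --- it is the missing content of the argument. In the $(\Leftarrow)$ direction the sides $A$ and $B$ of the bond $C_0$ induce connected but in general not $2$-connected subgraphs; then $M^*(G[A])$ is a disconnected matroid and, by condition (2) of the ear-decomposition definition, admits no ear decomposition at all, so the claim that the remaining ears ``split into two subsequences that themselves form odd ear decompositions of $M^*(G[A])$ and $M^*(G[B])$'' can be literally false. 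Even when both sides are $2$-connected, the subsequences are not automatically ear decompositions: each $C_i-\bigcup_{j<i}C_j$ is a circuit of $M^*(G)/\bigcup_{j<i}C_j$, and you must exhibit a circuit of $M^*(G[A])$ (a bond of $G[A]$, not of $G$) containing that ear and meeting the union of the previous $A$-side ears; the intersection $C_i\cap\bigcup_{j<i}C_j$ guaranteed by condition (2) may lie entirely inside $C_0$ or inside $B$-side edges. The symmetric difficulty kills the $(\Rightarrow)$ induction: the natural choice $B=\{w\}$ from the proof of Theorem \ref{dualkrit ekv} leaves $G[A]=G-w$, which need not be $2$-connected, so the inductive hypothesis does not apply to it, and no choice of the partition in (4) is guaranteed to produce $2$-connected halves.

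To repair this you would have to carry out the strengthening you only gesture at: prove a statement for all connected (indeed, all bridgeless) dual-critical graphs, phrased blockwise --- every block of $M^*(G)$ has an odd ear decomposition, i.e., $\varphi(M^*(G))=0$ in the notation of Section \ref{sec_5} --- and track how blocks of the pieces recombine into blocks of $G$. The paper omits its proof (it is deferred to \cite{techreport}), so a direct comparison is not possible here; but the surrounding material (Remark \ref{2offaktorkritbuild} and the theorem that a $2$-connected graph is factor-critical iff its graphic matroid has an odd ear decomposition) indicates that the intended route is to mirror the graphic-matroid ear-decomposition argument under matroid duality rather than to recurse on the vertex bipartition of Theorem \ref{dualkrit ekv}(4).
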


\section{A randomized algorithm by Bal\'azs and
Christian Szegedy} \label{sec_5}

\begin{defn}[\cite{szegedyszegedy}]
  Let $M$ be a connected, bridgeless matroid. We denote by $ \varphi(M)$ the
  minimal possible value of the number of even ears in an ear-decomposition of
  $M$. If $M$ is bridgeless but not connected, we define $\varphi(M)$ to be
  the sum of $\varphi(K)$ over all blocks $K$ of $M$. In particular $\varphi(M
  ) = 0$ if and only if every block of $M$ has an odd ear-decomposition.
\end{defn}

\begin{thm}[Szegedy--Szegedy, Theorem 10.8 in \cite{szegedyszegedy}]
  Let $M$ be a matroid that is representable over a field of characteristic
  $2$. There is a randomized polynomial algorithm which computes $\varphi(M)$.
\end{thm}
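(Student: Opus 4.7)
The plan is to reduce the computation of $\varphi(M)$ to a formal matrix rank problem over a field of characteristic $2$, then apply Schwartz--Zippel. I would begin from a representation of $M$ as a matrix $A$ with entries in $GF(2^k)$, given by assumption, whose $n$ columns correspond to the elements of $M$. The circuits of $M$ are the supports of minimal linear dependencies among the columns of $A$, so all combinatorial data of $M$ is in principle accessible algebraically from $A$.

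The heart of the argument is to construct an auxiliary matrix $\tilde A$ with entries in $GF(2^k)[x_1,\ldots,x_n]$ obtained from $A$ by attaching a distinct formal variable $x_e$ to each element $e$ of $M$ (for instance by scaling the $e$-th column by $x_e$, together with an added block to account for the amalgamation along successive ears). The construction must be chosen so that the generic rank of $\tilde A$ (the rank over the field of rational functions $GF(2^k)(x_1,\ldots,x_n)$) is an affine-linear expression in $\varphi(M)$. The content of this step is that any ear-decomposition $C_0,C_1,\ldots,C_k$ of $M$ induces a block-triangular factorization of $\tilde A$ in which each odd ear contributes a full-rank diagonal block and each even ear forces a rank defect of one; conversely, a generic realization yields such a factorization attaining the minimum number of even ears. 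This is where the characteristic-$2$ hypothesis is crucial, since parity of ear lengths translates cleanly into the vanishing/nonvanishing of certain minors only when $-1 = 1$.

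With the identity $\mathrm{rk}_{\mathrm{gen}}(\tilde A) = f(\varphi(M))$ in hand, the algorithm is standard: substitute independent uniform random values from a large extension $GF(2^\ell)$ for the $x_e$, then compute the rank of the resulting numerical matrix by Gaussian elimination. The Schwartz--Zippel lemma guarantees that with probability at least $1-\mathrm{poly}(n)/2^\ell$ the computed rank equals the generic rank, so $\ell$ polynomial in $n$ yields arbitrarily small error in polynomial time. The main obstacle is exactly the algebraic-combinatorial identity in the middle step: $\varphi(M)$ is defined as a minimum over all ear-decompositions, whereas the generic rank is a single intrinsic invariant, and aligning them requires an inductive analysis of how the contraction $M/(\bigcup_{j<i} C_j)$ evolves as each $C_i$ is attached and how this evolution is mirrored on the level of formal Schur complements of $\tilde A$.
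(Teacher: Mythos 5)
First, note that the paper does not prove this theorem at all: it is imported verbatim from Szegedy and Szegedy (Theorem 10.8 of \cite{szegedyszegedy}), and Section \ref{sec_5} only records the shape of the resulting algorithm in the special case of cographic matroids. So there is no in-paper proof to match your argument against; the relevant comparison is with the construction the paper does display for that special case.

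Measured against that, your proposal has a genuine gap, and it sits exactly where you yourself locate it. The entire mathematical content of the theorem is the identity relating the generic rank of some formal matrix to $\varphi(M)$, and your write-up neither specifies that matrix nor proves the identity: ``the construction must be chosen so that\ldots'' and ``the content of this step is that\ldots'' are statements of what is needed, not arguments. Moreover, the construction you gesture at (rescale the $e$-th column of a representation by $x_e$, plus an unspecified extra block) is not the one that works. The matrix the paper exhibits for cographic matroids is \emph{symmetric} and indexed by the elements outside a basis (the non-tree edges), with $(i,j)$ entry $\sum_{e\in T_i\cap T_j}x_e$; it is the Gram matrix of a symmetric --- hence, in characteristic $2$, symplectic --- bilinear form built from fundamental circuits, and $\varphi$ equals its \emph{corank}, not an affine shift of the rank of a rescaled representation matrix. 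Establishing that corank identity is precisely the content of Szegedy and Szegedy's theory of symplectic spaces and ear decompositions, and none of it is reproduced or replaced in your proposal. The outer shell --- substituting random values from a large extension of $GF(2)$, Gaussian elimination, Schwartz--Zippel bounding the failure probability --- is correct but routine; without the middle step there is no proof.
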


This gives a randomized polynomial algorithm for deciding dual-criticality. We can represent the cographic matroid of graphs over a field of
characteristic two: a graph is dual-critical if and only if
$\varphi(M^*(G))=0$.

We would like to outline this algorithm for cographic matroids. Let $T$ be the
edge set of a spanning tree of our graph. We associate independent
indeterminates with each edge of $T$: $x_e$ for all $e \in T$. The tree edges
of the fundamental cycle of $i \in E(G)-T$ will be denoted by $T_i$. Let
$A=(a_{i,j})$ be the following matrix ($i \in E(G)-T$ and $j \in E(G)-T$):
\begin{equation*}
a_{ij}= \sum_{e \in T_i \cap T_j} x_e.
\end{equation*}
If the fundamental cycles have no common edge (the sum is empty), then the
matrix entry is $0$. The corank of $A$ is equal to $\varphi(M^*)$ by the
theorem of Szegedy--Szegedy (\cite{szegedyszegedy}). So $G$ is dual-critical
if and only if $det(A)$ is not the constant zero polynomial. (The determinant is defined over the representing field.)

After choosing a large enough field of characteristic 2, this can be decided using the Schwarz--Zippel lemma (\cite{schwarzzippel}),
which provides a polynomial randomized algorithm.

\section{3-regular dual-critical graphs}

\subsection{Equivalent descriptions}

A 3-regular graph on $n$ vertices have $\frac{3n}{2}$ edges, so $n$ is
even. If the graph is also dual-critical, then it must have a good parity,
thus $n=4k+2$ for some integer $k$.


\begin{thm}\label{3-reg ekv}\footnote{Be\'ata Faller and Ervin Gy\H ori had
    the original idea that dual-criticality is equivalent to description (2)
    in the 3-regular case.}  The following are equivalent for any 3-regular
  graph $G=(V,E)$ which has $4k+2$ vertices.
\begin{enumerate}
\item[(1)] $G$ is dual-critical.
\item[(2)] There are $k+1$ independent vertices, such that their deletion
  leaves a connected graph.
\item[(3)] There are $k+1$ vertices whose deletion leaves a forest.
\item[(4)] There are some independent vertices whose deletion leaves a tree.
\item[(5)] There is a spanning tree, for which the deletion of the tree's
  edges makes a graph in which every component has an even number of edges.
\item[(6)] There is an $r$-rooted connected orientation for every vertex $r
  \in V$, in which all vertices but $r$ have an odd indegree.
\item[(7)] For every partition $\mathcal{P}$ of $V$
\begin{equation} \label{Nebesky ineq}
e(\mathcal{P})\geq |\mathcal{P}|+\bp(\mathcal{P})-1
\end{equation}
holds, where $e(\mathcal{P})$ denotes the number of edges between different
classes of $\mathcal{P}$ and $\bp(\mathcal{P})$ denotes the number of classes
in $\mathcal{P}$ spanning a subgraph which has bad parity.
\item[(8)] $G$ is upper-embeddable. \footnote{For the definition of upper-embeddability see \cite{techreport}}
\end{enumerate}
\end{thm}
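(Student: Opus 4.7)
The plan is to split the eight conditions into three clusters and link them with short bridges. Cluster A consists of the structural conditions $(2)$, $(3)$, $(4)$, which are interequivalent by pure edge-counting. Cluster B consists of $(5)$, $(7)$, $(8)$, where $(5)\Leftrightarrow (8)$ is a direct application of Xuong's characterization of upper-embeddability and $(7)\Leftrightarrow (8)$ is Nebesk\'y's partition-inequality characterization of the same property. Cluster C is the pair of orientation statements $(1)$ and $(6)$. I will bridge the clusters through $(3)\Leftrightarrow (5)$ and through $(1)\Leftrightarrow (3)$, and separately show $(1)\Leftrightarrow (6)$.

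Cluster A rests on the identity $|E|=6k+3$: if $S\subseteq V(G)$ has $|S|=k+1$ then $G-S$ has $3k+1$ vertices and $3k+i(S)$ edges, so $G-S$ is a forest only when $i(S)=0$, in which case the edge count is exactly $3k$ and $G-S$ must in fact be a tree. This yields $(2)\Leftrightarrow (3)\Leftrightarrow (4)$. For cluster B, since $|E|-|V|+1=2k+2$ is even, Xuong's ``at most one odd co-tree component'' degenerates to ``no odd co-tree component'', matching $(5)$ exactly; Nebesk\'y's theorem applied with $\bp(\mathcal{P})$ identified as the count of bad-parity classes matches $(7)$.

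For the bridge $(3)\Rightarrow (5)$: extend $G[V\setminus S]$ to a spanning tree $T$ of $G$ by adding one edge at each $s\in S$; since $S$ is independent, the co-tree $G\setminus T$ is a disjoint union of $K_{1,2}$'s, one centered at each $s\in S$, each contributing exactly two edges. For $(1)\Rightarrow (6)$, apply Theorem \ref{dualkrit ekv}(2) to each $r$ and invoke Remark \ref{rooted conn}. For $(1)\Rightarrow (3)$, take any good ordering of $G$ and let $B$ be the set of vertices of indegree $3$ in the associated good orientation; counting in-degrees yields $|B|=k+1$, and $B$ is automatically independent since each $u\in B$ has all three of its neighbors preceding $u$ in the ordering, leaving no room for an edge to another $B$-vertex. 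Finally, $(3)\Rightarrow (1)$: order $V\setminus S$ by any rooted traversal of the tree $G-S$ and append the vertices of $S$ in any order; each vertex of $V\setminus S$ then has exactly its tree-parent as a predecessor-neighbor (indegree $1$), while each $s\in S$, being independent in $G$, has all three of its neighbors in $V\setminus S$, hence preceding it (indegree $3$).

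The main obstacle I foresee is the reverse bridge $(5)\Rightarrow (3)$. Because every vertex has co-tree degree at most $2$, the co-tree decomposes into paths, cycles, and isolated vertices; the ``all components even'' hypothesis suggests choosing an alternating independent vertex cover on each component, which gives $k+1$ vertices covering every co-tree edge. However, one then needs those vertices to form an independent set in $G$ (so care must be taken with tree edges joining chosen vertices within or across co-tree components) and to consist of leaves of $T$ (so that removal leaves a subtree of $T$). A possible workaround is to bypass $(5)\Rightarrow (3)$ altogether and instead prove $(5)\Rightarrow (1)$ directly, either by induction via the recursive decomposition of Theorem \ref{dualkrit ekv}(4) applied to a partition aligned with a single co-tree component, or by building an acyclic good orientation edge-by-edge on the spanning tree $T$ while respecting the parity forced by the even components of $G\setminus T$.
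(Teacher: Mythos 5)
The paper itself omits the proof of this theorem (deferring it to the technical report), so I can only assess your proposal on its merits. Your architecture is sound and most pieces are correct: the edge count $|E|=6k+3$ does give $(2)\Leftrightarrow(3)\Leftrightarrow(4)$, the bridge $(3)\Rightarrow(5)$ works (each $s\in S$ contributes exactly two co-tree edges to its component, so every component has even size), reducing $(5)\Leftrightarrow(8)$ and $(7)\Leftrightarrow(8)$ to Xuong and Nebesk\'y is surely what the authors intend (both are cited), and $(3)\Rightarrow(1)$ via the tree traversal is fine. But as written the eight conditions are not yet proved equivalent, because two implications needed to close the cycle are missing. First, you prove only $(1)\Rightarrow(6)$ and never give any implication out of $(6)$. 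This direction is not automatic: an $r$-rooted connected orientation with odd non-root indegrees need not be acyclic, so it is not an instance of Theorem \ref{dualkrit ekv}(2). It is provable with tools you already use: in a cubic graph the indegree-$3$ vertices have outdegree $0$, hence form an independent set through which no directed path from $r$ can pass, so deleting them leaves a connected graph; when the root has indegree $0$ there are exactly $k+1$ of them and you land in $(2)$, while the root-indegree-$2$ case needs a separate short argument. Second, you explicitly leave $(5)\Rightarrow(3)$ open and only gesture at workarounds; without it, cluster B is implied by the rest but not equivalent to it.

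The obstacle you flag for $(5)\Rightarrow(3)$ actually dissolves under the degree count you are already using: a vertex incident to two co-tree edges has tree-degree $1$, i.e., it is a leaf of $T$. Since every vertex has co-tree degree at most $2$, the co-tree components are paths and cycles, each with an even number of edges, so choosing alternate vertices selects $(2k+2)/2=k+1$ vertices, each of co-tree degree exactly $2$, together covering every co-tree edge. All of them are leaves of $T$, and two leaves of a spanning tree on more than two vertices are never adjacent in $T$; alternation rules out co-tree adjacency; hence the selection is independent in $G$, and its deletion leaves $T$ minus $k+1$ leaves, a tree. You should also patch a small hole in $(1)\Rightarrow(3)$: you show the indegree-$3$ set $B$ is independent of size $k+1$, but condition $(3)$ requires $G-B$ to be a forest, and $3k$ edges on $3k+1$ vertices do not force this by themselves; you need the extra observation (Remark \ref{rooted conn} together with the fact that $B$-vertices have outdegree $0$) that $G-B$ is connected, hence a tree.
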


\goodbreak

\begin{thm}[Furst, Gross, McGeoch \cite{FurstGrossMcGeoch}]
  There is a polynomial algorithm that decides whether a graph is
  upper-embeddable or not. It runs in $O(end\log^6 n)$ time where $e$, $n$ and
  $d$ denote the number of edges, the number of vertices and the maximum
  degree respectively.
\end{thm}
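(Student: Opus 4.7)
The plan is to begin from Xuong's genus formula, which characterizes upper-embeddability combinatorially: a connected graph $G$ is upper-embeddable if and only if $\xi(G) \le 1$, where $\xi(G)$ is the minimum, over all spanning trees $T$ of $G$, of the number of components of the cotree $G - E(T)$ having an odd number of edges. This turns the topological question into a purely combinatorial optimization: find a spanning tree whose cotree has at most one odd component.

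First I would reformulate ``minimize odd cotree components'' as a matroid matching / parity problem inside the cographic matroid $M^{*}(G)$. Because $M^{*}(G)$ is linear over $\mathrm{GF}(2)$, the general linear matroid parity framework of Lov\'asz is applicable, so a polynomial algorithm follows almost immediately; the remaining work is to design a specialized version that exploits the cographic structure. The edges incident to each vertex are grouped into local pairs (which is where the degree parameter $d$ enters the complexity), and a maximum collection of such pairs whose union is compatible with some spanning tree corresponds to an optimum value of $\xi(G)$.

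Next I would run an augmenting-path scheme tailored to this setting. Maintain the current spanning tree $T$ by a link-cut forest in the style of Sleator and Tarjan, so that fundamental-cycle queries and cotree-component-parity queries---which together form the cographic independence oracle---run in $O(\log n)$ amortized time. Each phase searches for a swap sequence that decreases the number of odd cotree components; a careful analysis shows that $O(n)$ phases suffice and that one phase does $O(ed \log^{5} n)$ work, yielding the advertised $O(end \log^{6} n)$ total.

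The main obstacle will be the fine-grained running time rather than polynomiality. Once Xuong's theorem is invoked, polynomial-time solvability follows essentially from classical matroid-matching results; the technical heart of the Furst--Gross--McGeoch paper lies elsewhere, namely in (a) avoiding the generic polynomial overhead of linear matroid parity by exploiting the graphic/cographic structure, and (b) integrating dynamic tree data structures into the augmenting-path loop so that each augmentation incurs only polylogarithmic overhead per edge. It is this bookkeeping, rather than the reduction itself, that produces the $\log^{6} n$ factor in the final bound.
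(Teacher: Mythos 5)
This statement is an imported result: the paper does not prove it, but simply cites Furst, Gross and McGeoch \cite{FurstGrossMcGeoch} and uses it as a black box (together with the equivalence of dual-criticality and upper-embeddability for $3$-regular graphs) to get a polynomial test. So there is no in-paper proof to compare yours against; the only fair comparison is with the original source, whose strategy your outline does broadly match: Xuong's characterization, a reduction to a parity problem in the cographic matroid (which is linear over $\mathrm{GF}(2)$, so Lov\'asz's matroid parity applies), and dynamic-tree data structures for the independence oracle.

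That said, as a proof your text has genuine gaps rather than just omitted routine details. First, the reduction itself is never actually carried out: ``minimize odd cotree components'' does not translate directly into a matroid parity instance. The correct bridge is the other half of Xuong's theorem, which recasts $\xi(G)\le 1$ as finding a maximum collection of \emph{disjoint pairs of adjacent edges} whose union can be removed leaving a connected (spanning) subgraph; these $O(ed)$ adjacent pairs are the element-pairs of the cographic matroid parity instance, and this is precisely where the factor $d$ enters the bound. Your phrase ``grouped into local pairs \dots\ compatible with some spanning tree'' gestures at this but neither states the pairing nor proves that an optimum matroid matching equals $(\beta(G)-\xi(G))/2$. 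Second, the quantitative heart of the theorem --- that $O(n)$ augmentation phases suffice and that each phase costs $O(ed\log^{5}n)$ --- is asserted with the words ``a careful analysis shows,'' which is exactly the part that cannot be waved away if the claim is the specific bound $O(end\log^{6}n)$ rather than mere polynomiality. If your goal is only what this paper actually needs (a deterministic polynomial algorithm), the Xuong-plus-linear-matroid-parity reduction already suffices once it is made precise; if your goal is the stated running time, the missing bookkeeping is the theorem.
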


\begin{cor}
  There is an algorithm for deciding dual-criticality in the 3-regular case
  which runs in $O(n^2\log^6 n)$ time.
\end{cor}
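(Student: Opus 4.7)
The plan is to chain together two previously stated results: the equivalence (1)$\Leftrightarrow$(8) in Theorem \ref{3-reg ekv}, which reduces dual-criticality of a 3-regular graph to upper-embeddability, and the algorithm of Furst, Gross and McGeoch, which decides upper-embeddability in $O(end\log^6 n)$ time.

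First I would pre-process the input in linear time: verify that $G$ is 3-regular and has $n=4k+2$ vertices. If the vertex count is not of this form, by the opening remark of Section~6 (a 3-regular dual-critical graph must have good parity, forcing $n \equiv 2 \pmod 4$) the algorithm can immediately output ``not dual-critical.'' Otherwise, we feed $G$ into the Furst--Gross--McGeoch routine and return its answer, which by Theorem \ref{3-reg ekv}(1)$\Leftrightarrow$(8) correctly decides dual-criticality.

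For the running time, I would simply substitute the parameters of a 3-regular graph into the Furst--Gross--McGeoch bound. The maximum degree is $d=3$ and the edge count is $e=3n/2$, so
\begin{equation*}
O(end\log^6 n) \;=\; O\!\left(\tfrac{3n}{2}\cdot n\cdot 3\cdot \log^6 n\right) \;=\; O(n^2\log^6 n),
\end{equation*}
which is the claimed bound. There is no real obstacle here: the work is entirely in the two theorems already cited, and the corollary is a direct specialization.
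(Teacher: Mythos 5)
Your proposal is correct and matches the paper's intended argument exactly: combine the equivalence (1)$\Leftrightarrow$(8) of Theorem \ref{3-reg ekv} with the Furst--Gross--McGeoch algorithm and substitute $e=3n/2$, $d=3$ into the $O(end\log^6 n)$ bound. The preliminary parity check is a harmless (and sensible) addition.
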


\section{A relaxed version of dual-criticality -- k-dual-critical graphs}

\begin{defn}[k-dual-critical, \cite{PersComm}]
A graph $G(V,E)$ is called $k$-dual-critical if $V$ has a partition $\mathcal{P}$ into $k$ non-empty subsets such that the contraction of all partition classes results in a dual-critical graph. (In this contraction the loops are deleted, but the parallel edges are preserved.)
\end{defn}

Equivalently, a graph is $k$-dual critical, if and only if its vertex set
has a partition $\mathcal{P}=\{P_1,P_2, \dots P_k\}$ such that
$d(\cup_{j=1}^{i-1}P_j,P_i)$ is odd for each $2\leq i \leq k$, (recall that
$d(A,B)$ denotes the number of edges between two disjoint vertex sets $A, B$).
We say that a vertex partition is a good $k$-partition if it satisfies the
condition in the above definition. We use the term good ordering for partition
classes of a good $k$-partition as we did for vertices of a dual-critical graph.

From the definition it is trivial that a graph is dual-critical if and only if
it is $|V|$-dual-critical. It is also easy to observe that every graph is
$1$-dual-critical. Moreover, if $G$ is $k$-dual-critical, then it is
$\ell$-dual-critical for each $1 \leq \ell \leq k$, since one can take a good
ordering of the partition classes of a good $k$-partition, and unify the first
$k-\ell$ classes to get a good $\ell$-partition.

We denote by $\maxdc(G)$ the biggest number $k$ for which $G$ is
$k$-dual-critical. It is easy to verify that $G$ is Eulerian if and only if
$\maxdc(G)=1$. \textit{Optimal} partition of a graph will refer to a good
$\maxdc(G)$-partition. The following proposition is easy to verify:

\begin{prop}
The minimal number of vertex pairs that need to be contracted in $G$ to
get a dual-critical graph is $|V(G)|-\maxdc(G)$.
\end{prop}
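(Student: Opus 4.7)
The plan is to prove both inequalities by exhibiting a natural correspondence between sequences of $t$ pair contractions on $G$ and partitions of $V(G)$ into $|V(G)| - t$ non-empty classes, under which the post-contraction graph coincides with the class-contraction $G/\mathcal{P}$.

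For the upper bound, I would start from a good $k$-partition $\mathcal{P} = \{P_1, \ldots, P_k\}$ with $k = \maxdc(G)$, provided by the definition of $\maxdc$. Collapsing each class $P_i$ of size $c_i$ to a single vertex requires $c_i - 1$ successive pair contractions inside $P_i$, for a total of $\sum_{i=1}^{k}(c_i - 1) = |V(G)| - \maxdc(G)$ contractions. The resulting graph is by construction the class-contraction $G/\mathcal{P}$, which is dual-critical by the definition of $\maxdc(G)$-dual-criticality.

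For the lower bound, suppose $t$ pair contractions turn $G$ into a dual-critical graph $H$. Declaring two original vertices equivalent whenever they are identified into the same vertex of $H$ yields a partition of $V(G)$ into $|V(H)| = |V(G)| - t$ non-empty classes, and $H$ is precisely the class-contraction of $G$ under this partition (loops deleted and parallel edges kept on both sides). Hence $G$ is $(|V(G)| - t)$-dual-critical, so $\maxdc(G) \geq |V(G)| - t$, giving $t \geq |V(G)| - \maxdc(G)$. Combining the two bounds yields equality.

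The only point requiring any real care is the bookkeeping claim that iterated pair contractions and the single-step block contraction of a partition produce identical multigraphs under the paper's conventions for loop deletion and parallel-edge preservation; this is essentially a routine check that contraction is associative in the appropriate sense. Once it is confirmed, the statement is simply the translation between the "number of contractions" viewpoint and the "partition size" viewpoint built into the definition of $k$-dual-criticality, so no further ideas are needed.
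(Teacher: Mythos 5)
Your argument is correct: the bijection between sequences of $t$ pair contractions and partitions into $|V(G)|-t$ classes, checked in both directions, is exactly the intended verification, and the paper itself leaves this proposition as ``easy to verify'' without writing out a proof. The one point you flag --- that iterated pair contractions with loop deletion agree with the one-step class contraction --- does hold under the paper's conventions (all intra-class edges eventually become loops and are deleted, all inter-class edges survive as parallel edges), so nothing is missing.
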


\begin{prop}\label{kdcbasicfacts}
Let $G$ be any simple graph. If the graph spanned by a partition class in
a good $\ell$-partition is non-Eulerian, then this class can be divided into
two subclasses to get a good $(\ell\! + \!1)$-partition. Consequently in an
optimal partition of $G$ every class spans an Eulerian subgraph.
\end{prop}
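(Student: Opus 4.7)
The plan is to refine the $\ell$-partition by splitting the offending class into one vertex plus the rest. Fix a good ordering $P_1,\ldots,P_\ell$, suppose $G[P_i]$ is non-Eulerian, and choose $v\in P_i$ with $\deg_{G[P_i]}(v)$ odd. Write $S=\bigcup_{j<i}P_j$ for the union of the preceding classes.

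For $i\geq 2$ I would exploit that $d(S,\{v\})+d(S,P_i\setminus\{v\})=d(S,P_i)$ is odd by the original goodness, so exactly one of these two summands is odd. Label the odd-summand piece $P_i'$ and the other $P_i''$, and insert them in that order to form the refined partition $P_1,\ldots,P_{i-1},P_i',P_i'',P_{i+1},\ldots,P_\ell$. Then $d(S,P_i')$ is odd by construction, and
\[
d(S\cup P_i',P_i'') \;=\; d(S,P_i'') + d(P_i',P_i''),
\]
whose first summand is even by the labelling and whose second summand equals $\deg_{G[P_i]}(v)$, which is odd; hence the new cut is odd as required. In the edge case $i=1$ the new first class carries no constraint, and the second-class cut $d(P_1',P_1'')=\deg_{G[P_1]}(v)$ is odd automatically, so any labelling works.

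For every class after $P_i$ the prefix union is unchanged, so their odd-cut conditions transfer verbatim from the original good ordering; the refinement is therefore a good $(\ell+1)$-partition. The ``consequently'' statement is then immediate: a non-Eulerian class in a $\maxdc(G)$-partition would permit refinement to a good $(\maxdc(G)+1)$-partition, contradicting maximality. I do not foresee any real obstacle; the only care needed is the parity bookkeeping when deciding which piece plays the role of $P_i'$.
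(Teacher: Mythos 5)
Your proof is correct and takes essentially the same route as the paper's: the paper splits $P_i$ along an arbitrary odd proper cut of the non-Eulerian subgraph $G[P_i]$ and performs the identical parity bookkeeping on $d(S,P_i')$, $d(S,P_i'')$ and $d(P_i',P_i'')$, while you simply specialize to the odd cut isolating a single odd-degree vertex $v$. Your explicit handling of the $i=1$ edge case is a minor point the paper's proof glosses over; otherwise the arguments coincide.
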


\begin{proof}
Suppose that a class $P_i$ of a good partition $P_1,P_2, \dots, P_\ell$ spans
a non-Eulerian subgraph. This graph has an odd proper cut with sides $P'_i$
and $P''_i$. Since $d(P_i, \cup_{j=1}^{i-1} P_j)$ is odd, one of
$d(P'_i, \cup_{j=1}^{i-1} P_j)$ and $d(P''_i, \cup_{j=1}^{i-1} P_j)$ is odd,
and the other one is even. Suppose $d(P'_i, \cup_{j=1}^{i-1} P_j)$ is odd. Then
the partition $(P_1,P_2,\dots, P_{i-1},P'_i,P''_i,P_{i+1},\dots, P_\ell)$ is a
good $(\ell\!+\!1)$-partition of $G$.
\end{proof}

\begin{thm}\label{leftalign}
Every graph has an optimal partition where all classes contain
1 or 2 vertices except possibly the first class.
\end{thm}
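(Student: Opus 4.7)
The plan is to argue by induction on $|V(G)|$, reducing the induction step to the following peeling lemma: for every graph with $\maxdc(G) = k \geq 2$ there exists $R \subseteq V(G)$ with $|R| \in \{1,2\}$, inducing no edge when $|R| = 2$, such that $d(V \setminus R, R)$ is odd and $\maxdc(G - R) = k - 1$. Granting the lemma, the inductive hypothesis applied to $G - R$ yields a left-aligned optimal $(k-1)$-partition $P_1, \ldots, P_{k-1}$; appending $R$ as the $k$-th class gives a left-aligned optimal $k$-partition of $G$, since $d(P_1 \cup \cdots \cup P_{k-1}, R) = d(V \setminus R, R)$ is odd by construction.

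To prove the peeling lemma I would first take any good $k$-partition $(P_1, \ldots, P_k)$ of $G$ in good ordering. By Proposition~\ref{kdcbasicfacts} each class spans an Eulerian subgraph. In the easy case $|P_k| \leq 2$, set $R := P_k$: Eulerianity of $G[P_k]$ forces $R$ to span no edge when $|R| = 2$; $d(V \setminus R, R)$ is odd by the good ordering; and $(P_1, \ldots, P_{k-1})$ witnesses $\maxdc(G - R) \geq k - 1$, with equality because otherwise appending $R$ to a good $k$-partition of $G - R$ would produce a good $(k+1)$-partition of $G$, contradicting $\maxdc(G) = k$.

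The substantive case, which I expect to be the main obstacle, is $|P_k| \geq 3$. Here the plan is to replace $(P_1, \ldots, P_k)$ by a different good $k$-partition whose last class has size at most $2$. Since $d(P_k, V \setminus P_k)$ is odd and $G[P_k]$ is Eulerian (so every vertex has even internal degree and every cut inside $G[P_k]$ has even size), some $v \in P_k$ has $d(v, V \setminus P_k)$ odd, forcing $d(v)$ itself to be odd. My first attempt is to take $R := \{v\}$ and absorb $P_k \setminus \{v\}$ into $P_1$, yielding the candidate partition $\bigl(P_1 \cup (P_k \setminus \{v\}),\, P_2, \ldots, P_{k-1},\, \{v\}\bigr)$; its goodness reduces to the parity conditions $d(v, P_\ell) \equiv d(P_k, P_\ell) \pmod{2}$ for each $2 \leq \ell \leq k-1$, which form an $\mathbb{F}_2$-linear condition on $v$. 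If no single $v \in P_k$ satisfies all these conditions simultaneously, I would take $R := \{v,w\}$ for a suitable non-adjacent pair in $P_k$ and perform the analogous absorption.

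The crux is thus showing that some singleton or pair $R \subseteq P_k$ always exists. I would attack this by analyzing the parity vectors $a_v := \bigl(d(v, P_\ell) \bmod 2\bigr)_{\ell<k} \in \mathbb{F}_2^{k-1}$ for $v \in P_k$, whose coordinate sums are fixed by $\sum_{v \in P_k} d(v, P_\ell) = d(P_k, P_\ell)$ for each $\ell$. Using $|P_k| \geq 3$, together with the constraint that every cut inside the Eulerian graph $G[P_k]$ has even size (which severely restricts the possible configurations of $\{a_v : v \in P_k\}$), I expect to produce either a vertex whose parity vector matches the required target or a non-adjacent pair summing to it. This pigeonhole/linear-algebra step is the only place where genuine work is required; all the surrounding reductions are mechanical.
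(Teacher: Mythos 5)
There is a genuine gap, and it sits exactly where you flagged it: the ``pigeonhole/linear-algebra step'' does not work in the form you set it up, because absorbing $P_k\setminus R$ into $P_1$ forces $R$ to satisfy one parity condition \emph{per intermediate class}. Your target is a specific nonzero vector $b=(d(P_k,P_\ell)\bmod 2)_{\ell<k}$ in $\mathbb{F}_2^{k-1}$, and you must realize it as $a_v$ or as $a_v+a_w$ for a non-adjacent pair. For $k\geq 4$ this can simply fail: take $P_k=\{x,y,z\}$ with $a_x=(0,1,0)$, $a_y=(0,0,1)$, $a_z=(1,1,1)$, so $\sum a_v=(1,0,0)=b$; no singleton equals $b$ and the three pairwise sums are $(0,1,1),(1,0,1),(1,1,0)$, none of which is $b$ --- and this configuration is consistent with every constraint you have available (each $a_v$ has odd coordinate sum, $G[P_k]$ Eulerian). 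Even for $k=3$, where the target space is $\mathbb{F}_2^{2}$, the unique pair summing to $b$ may be adjacent (e.g.\ $G[P_3]$ a triangle with $a_x=(1,1),a_y=(0,1),a_z=(0,0)$, $b=(1,0)$), and your restriction to $|R|\le 2$ gives you no further moves. So the peeling lemma, at least as you propose to prove it, is unproven and its proof strategy breaks.

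The paper avoids both problems by making two different choices. First, vertices are moved from $P_i$ into the \emph{immediately preceding} class $P_{i-1}$ rather than into $P_1$; then a set $S\subsetneq P_i$ is movable iff $d(S,Q_{i-2})\equiv d(S,P_{i-1})\equiv 0$ (using that $G[P_i]$ is Eulerian, so $d(S,P_i-S)\equiv 0$), which is a condition in $\mathbb{F}_2^{2}$ with target $(0,0)$, \emph{independently of $k$}. Second, the moved set $S$ is not restricted to one or two vertices: splitting $P_i$ into any three nonempty parts $A,B,C$, either some $z$-vector is $(0,0)$ (that part is movable), or two are equal (their union is movable), or the three vectors are $(0,1),(1,1),(1,0)$, whence $d(P_i,Q_{i-1})\equiv 0$, contradicting goodness. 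Iterating shrinks each $P_i$ (for $i=k,\dots,3$) to size at most $2$ while keeping the partition optimal, and $i=2$ is handled by moving $P_2-v$ for a vertex $v$ with $d(P_1,v)$ odd. If you want to salvage your peeling framework, you would need to prove the peeling lemma by essentially this argument (shrink the last class by moving flexible-sized subsets forward), at which point the induction on $|V|$ buys you nothing over the paper's direct in-place modification.
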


\begin{proof}
Let $P_1,P_2, \dots P_k$ be a good ordering of an optimal partition
$\mathcal{P}$, and let $Q_i=\cup_{j=1}^{i} P_j$.

Starting from $P_k$ we will show that if $|P_i|>2$ then some vertices
can be moved from $P_i$ to $P_{i-1}$ (for each $i=k, k-1, \dots 2$)
leaving at most 2 vertices in $P_i$. Note that this operation preserves the
parity of $d(Q_{j-1},P_j)$ if $j\not\in \{i-1,i\}$. We fix an $i\geq 3$ and
observe that a set of vertices $S \subsetneq P_i$ can be moved to $P_{i-1}$ if
and only if
\begin{align}
d(S, P_i- S) &\equiv d(S,P_{i-1}) \label{goodmove1}\\ 
d(S,Q_{i-2}) &\equiv 0. \label{goodmove2}
\end{align}
Since $\mathcal{P}$ is optimal, by Proposition \ref{kdcbasicfacts} the
graph spanned by $P_i$ is Eulerian, thus all its cuts are even; so our
conditions for a set $S$ to be movable are $d(S,P_{i-1}) \equiv d(S,Q_{i-2})
\equiv 0$.

It is sufficient to show that if $|P_i|\geq 3$ then there is a non-empty
movable $S \subsetneq P_i$. For a vertex set $R\subsetneq P_i$ let $z_R$
be a 2-dimensional vector over the two element field representing the parity
of $d(Q_{i-2},R)$ and $d(P_{i-1},R)$. Note that $S$ is movable if and only if
$z_S=(0,0)$.

Let $A,B,C$ be any $3$-partition of $P_i$ where all the classes are non-empty.
(Such a partition exists because $|P_i|\geq 3$.) If one of $z_A,z_B$ or $z_C$
is $(0,0)$, then the corresponding set is movable. Suppose there are two equal
vectors among $z_A,z_B$ and $z_C$, eg. $z_A=z_B$. Then $z_A+z_B=(0,0)$, thus
$S=A \cup B$ is movable. The only remaining case is that the three vectors are
$(0,1), (1,1)$ and $(1,0)$ in some order. But this means that $z_A+z_B+z_C=(0,0)$,
thus $d(P_i,Q_{i-1})\equiv 0$, a contradiction.

If $i=2$, then let $v$ be any vertex in $P_2$ for which $d(P_1,v)$ is odd.
(Such vertex exists because $d(P_1,P_2)$ is odd.) The set $P_2-v$ is moveable,
since $d(P_2-v,P_1)\equiv 0$.
\end{proof}

We call a good k-partition \textit{left-aligned} if all its classes except
possibly the first one have either 1 or 2 vertices. For a constant $k$, the
$k$-dual-criticality of a graph can be determined in polynomial time. Consider
the following algorithm that decides k-dual-criticality:


\smallskip
\noindent {\large \verb|Recursive k-dual-critical|} algorithm:\\
If $k\leq 1$, then we return true. For each odd degree vertex $v$ if
\verb|Recursive k-dual-critical|$(k-1,\; G-v)$ is true, then we return
true. Then we take each vertex pair $v,w$ such that $d(v)+d(w) \equiv 1$,
and we return true if \verb|Recursive k-dual-critical|$(k-1,\; G-v-w)$ is
true. Finally, if we did not return yet, we return false.

By induction it is a routine to prove that the total time required for
this procedure is at most $O(n^{2k})$. 
%

\begin{prop} \label{maxdc(G)=2}
$\maxdc(G)=2$, if and only if $G$ is a graph that consists of an even clique
and some isolated vertices.
\end{prop}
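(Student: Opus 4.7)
My strategy is to first recast 3-dual-criticality in a convenient form, then handle both directions. The reformulation I will use is: $G$ fails to be 3-dual-critical if and only if for every nonempty proper subset $T \subsetneq V$ with $d(T, V\setminus T)$ odd, every vertex of $G[T]$ has even degree. (For the nontrivial direction: if some such $T$ has an odd-degree vertex $u$ in $G[T]$, then $(\{u\},\,T\setminus\{u\},\,V\setminus T)$ is a good 3-partition, since $d(\{u\}, T\setminus\{u\})=d_{G[T]}(u)$ is odd and $d(T,V\setminus T)$ is odd.)

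For the ``if'' direction, assume $G = K_{2m}\cup I$ with $m\geq 1$. Clique vertices have odd degree $2m-1$, so $\maxdc(G)\geq 2$. For any $S\subseteq V$, writing $s=|S\cap V(K_{2m})|$, one has $d(S,V\setminus S)=s(2m-s)$, which is odd iff $s$ is odd. The two parity conditions defining a good 3-partition then force $|P_i\cap V(K_{2m})|$ to be odd for each $i\in\{1,2,3\}$; but three odd numbers cannot sum to the even number $2m$, a contradiction. Hence $\maxdc(G)=2$.

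For the ``only if'' direction, assume $\maxdc(G)=2$ and let $O$ denote the set of odd-degree vertices (nonempty, and of even size). The small case $|V|\leq 2$ forces $G=K_2$ directly, so assume $|V|\geq 3$. I will apply the reformulation twice. First, for each $u\in O$, take $T=V\setminus\{u\}$: the cut $d(T,V\setminus T)=d(u)$ is odd, and every $w\neq u$ satisfies $d_{G[T]}(w)=d(w)-[uw\in E]\equiv 0\pmod 2$, so $uw\in E$ iff $d(w)$ is odd; hence $N(u)=O\setminus\{u\}$. This forces $G[O]=K_{|O|}$ and eliminates all edges between $O$ and $V\setminus O$. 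Second, if $|V\setminus O|\geq 2$, take $T=\{u,v,w\}$ with $u\in O$ and distinct $v,w\in V\setminus O$: then $|T\cap O|=1$ makes the cut odd, while $v$'s degree in $G[T]$ equals $[vw\in E]$, which must be even, so $vw\notin E$. Thus $V\setminus O$ is edgeless and disconnected from $O$, i.e., consists of isolated vertices (the cases $|V\setminus O|\leq 1$ follow immediately from step one). Combining, $G=K_{|O|}\cup I$ with $|O|=2m$ even, $m\geq 1$.

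The main obstacle is identifying the right reformulation and the two probing families $T=V\setminus\{u\}$ and $T=\{u,v,w\}$; once these are in place, the structural deductions are short. The only subtle point is the small-case bookkeeping where there are not enough vertices to form the desired $T$, which is routinely handled by inspecting $|V|\leq 2$ and $|V\setminus O|\leq 1$ directly.
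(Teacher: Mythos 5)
Your proof is correct, and it takes a noticeably different route from the paper's. The paper proves only the ``only if'' direction: it starts from a left-aligned optimal $2$-partition (using Theorem~\ref{leftalign} and Proposition~\ref{kdcbasicfacts}), reduces to the case $P_2=\{v\}$ with $d(v)$ odd, and then runs a three-case analysis on vertex pairs $u_1,u_2$ (both in $N(v)$, one in each of $N(v)$ and $P_1-N(v)$, both outside $N(v)$), exhibiting an explicit good $3$- or $4$-partition in each bad configuration; this shows $N(v)\cup\{v\}$ is an even clique and everything else is isolated. You instead first package the obstruction to $3$-dual-criticality as ``every proper odd cut $T$ has $G[T]$ with all degrees even'' (which is essentially Proposition~\ref{kdcbasicfacts} in disguise, applied to the last class of a good $2$-partition), and then probe with the two families $T=V\setminus\{u\}$ and $T=\{u,v,w\}$; this identifies the clique intrinsically as the set $O$ of odd-degree vertices rather than as the neighbourhood of one chosen vertex, and avoids invoking the left-aligned/optimal-partition machinery. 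Your version also supplies the ``if'' direction (the parity count $s(2m-s)$ on clique intersections), which the paper's proof omits entirely. The only points worth making explicit are that $O\neq\emptyset$ follows from $\maxdc(G)\ge 2$ via the handshake lemma applied to an odd cut, and that $|V|\ge 4$ whenever the probe $T=\{u,v,w\}$ is invoked (which you do note follows from $|O|$ being even and positive); neither is a gap.
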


\begin{proof}
Let $P_1,P_2$ be a left-aligned optimal partition. If $P_2$ has two
vertices, then one of them has an even number of incoming edges from $P_1$
(since $d(P_1,P_2)$ is odd). We can put that vertex into $P_1$ to get a
partition $P_1=V-\{v$\} and  $P_2=\{v\}$. We denote by $N(v)$ the neighbours
of $v$. The following are easy to verify using the fact that $G[P_1]$ is Eulerian.
Suppose $u_1$ and $u_2$ are disconnected vertices in $N(v)$. The partition
$(P_1- \{u_1,u_2\}, \{v\}, \{u_1\}, \{u_2\})$ is a good 4-partition of $G$,
a contradiction. If $u_1\in N(v)$ and $u_2\in P_1 - N(v)$ are connected,
then $(P_1- \{u_1,u_2\}, \{v,u_2\}, \{u_1\})$ is a good 3-partition of $G$,
a contradiction. If $u_1,u_2 \in P_1- N(v)$ are connected, then the partition
$(P_1- \{u_1,u_2\}, \{u_1\}, \{v,u_2\})$ is a good 3-partition of $G$, a
contradiction.
\end{proof}

We call a partition \textit{maximal} if all its classes are Eulerian.
\medskip

\noindent {\large \verb|Greedy algorithm|} for constructing maximal partitions:
\smallskip

Using the idea from the \verb|Recursive k-dual-critical| method we can construct
a maximal left-aligned partition of any graph $G$. If there is an odd degree
vertex $v$ or disconnected vertex pair $v,w$ with $d(v)+d(w)\equiv 1$ whose
deletion \textit{leaves a non-Eulerian graph}, then let this vertex or vertex
pair be the last class of the partition, and delete this class from $G$. We
repeat this step until no such class can be formed. At this point the graph is
still non-Eulerian, but the deletion of any odd degree vertex leaves an Eulerian graph. 

By the proof of Proposition \ref{maxdc(G)=2} it is easy to see that the remaining
graph consists of an even clique and some isolated vertices. We choose an odd
degree vertex $v$, and add a new partition class $\{v\}$. The rest of the points
will provide the first class of the partition. The partition is maximal, since
every class spans an Eulerian graph (we made sure to put disconnected vertex pairs
in the 2-vertex classes).

By continuously updating vertex degrees, this algorithm runs in $O(kn^2)$ time. 

\medskip
%

Let $\mathcal{P}$ be a fixed maximal left-aligned partition with $\ell$ classes. We show further possible ways of increasing the number of classes in a maximal partition. Let $\sigma_0:P_1 \rightarrow \{2,3, \dots \ell, \infty\}$  denote the smallest index $i$ for which $d(v,P_i)\equiv 0$ for $v\in P_1$. If there is no such index, then we define $\sigma_0(v)=\infty$. We define $\sigma_1:P_1 \rightarrow \{2,3, \dots \ell, \infty\}$ similarly as the smallest index $i$ for which $d(v,P_i)\equiv 1$.

In the following four cases we will be able to increase the number of classes.
\begin{description}\label{opspage}

\item[(1)] Suppose that $u$ and $v$ are distinct isolated vertices in $G[P_1]$ and $\sigma_1(u)=\sigma_1(v)=s<\infty$.

\item[(2)] Suppose that $u$ and $v$ are distinct isolated vertices in $G[P_1]$ and $\sigma_1(u)=\sigma_1(v)=\infty$ and there is a class $P_s$ such that $N(u)\cap P_s\neq N(v) \cap P_s$.

\item[(3)] Suppose that $u$ and $v$ are distinct vertices in the clique of $G[P_1]$ and $\sigma_0(u)=\sigma_0(v)=s<\infty$.

\item[(4)] Suppose that $u$ and $v$ are distinct vertices in the clique of $G[P_1]$ and $\sigma_0(u)=\sigma_0(v)=\infty$ and $N(u)\cap P_s\neq N(v) \cap P_s$ for some $s$.
\end{description}

In these cases the partition $(P_1-\{u,v\}, P_2, \ldots, P_s\cup \{u,v\}, P_{s+1}, \ldots, P_\ell)$ is a good partition, but the class $P_s\cup \{u,v\}$ induces a non-Eulerian subgraph.
By repeatedly applying Proposition \ref{kdcbasicfacts} we can split this class into classes of size at most 2. The new partition is a left-aligned maximal partition on more than $\ell$ classes. This algorithm makes $O(\ell n^2)$ steps as we can process $P_i$ for $i=2,\ldots, \ell$ (to decide whether there are two vertices in $P_1$ such that one of the rules can be applied with $s=i$) in time $O(n^2)$. If one of the rules can be applied, then we can construct a maximal left-aligned partition on at least $\ell +1$ classes in the same time window.

\begin{defn}[Equivalent vertices]
A connected vertex pair $v,w$ is called connected-equivalent, or in short c-equivalent if $N(v)-\{w\}=N(w)-\{v\}$. If $v$ and $w$ are disconnected and $N(v)=N(w)$, then they will be called disconnected-equivalent or d-equivalent.
\end{defn}

It is easy to verify that both c-equivalence and d-equivalence are equivalence relations.

\begin{lem} \label{thereareequivs}
Let $G$ be a graph and suppose $k\leq \maxdc(G)$. Let $\mathcal{P}$ be a maximal left-aligned partition of $G$ obtained by the \verb|Greedy algorithm| with  $\ell \leq k$ classes that cannot be improved by any of the above operations. If the first class of this partition has size at least $4k+1$, then it contains $k+2$ c-equivalent or $k+2$ d-equivalent vertices.
\end{lem}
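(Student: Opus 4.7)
The plan is to exploit the structural description of $G[P_1]$ supplied by the \verb|Greedy algorithm|---namely, as noted in the proof of Proposition \ref{maxdc(G)=2}, $G[P_1]$ is the disjoint union of an even clique $K$ and a set $I$ of isolated vertices---and then read off what the non-applicability of operations (1)--(4) tells us about $I$ and $K$. A simple pigeonhole step will close the argument.

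First I would handle the isolated vertices. Non-applicability of operation (1) says that $\sigma_1$ is injective on the subset of $I$ on which it takes a finite value; since finite values lie in $\{2,\dots,\ell\}$, at most $\ell-1\le k-1$ vertices of $I$ have finite $\sigma_1$. The remaining vertices of $I$ all have $\sigma_1=\infty$, and non-applicability of (2) then forces, for any two such vertices $u,v$, the equality $N(u)\cap P_s=N(v)\cap P_s$ for every $s\ge 2$. Since $u$ and $v$ are isolated in $G[P_1]$ they have no neighbours in $P_1$, so in fact $N(u)=N(v)$; together with their non-adjacency this is exactly d-equivalence. Hence these ``leftover'' isolated vertices form a single d-equivalence class.

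Next I would run the analogous argument on the clique $K$, using operations (3) and (4): at most $\ell-1\le k-1$ clique vertices have finite $\sigma_0$, and the remainder are all pairwise c-equivalent. The one place deserving care is the unpacking of the defining identity $N(u)\setminus\{v\}=N(v)\setminus\{u\}$: since $N(u)\cap P_1=K\setminus\{u\}$ and $N(v)\cap P_1=K\setminus\{v\}$, this identity reduces to agreement on every $P_s$ with $s\ge 2$, which is exactly what non-applicability of (4) provides, combined with the fact that $u$ and $v$ are adjacent (both lying in $K$).

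To finish, I would count: at most $2(\ell-1)\le 2k-2$ vertices of $P_1$ are ``exceptional'' in the sense of carrying a finite value of $\sigma_0$ or $\sigma_1$. Therefore at least $|P_1|-(2k-2)\ge(4k+1)-(2k-2)=2k+3$ vertices of $P_1$ lie either in the single d-equivalence class or in the single c-equivalence class identified above, so by pigeonhole one of these two classes contains at least $\lceil(2k+3)/2\rceil=k+2$ vertices. I do not foresee a genuine obstacle; the argument is essentially a bookkeeping exercise, and the only subtle point is verifying that the partition-local conditions coming from (2) and (4) really correspond to d- and c-equivalence in the whole graph $G$.
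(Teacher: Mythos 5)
Your proposal is correct and follows essentially the same route as the paper: both use the clique-plus-isolated-vertices structure of $G[P_1]$, bound the number of vertices with finite $\sigma_0$ or $\sigma_1$ by $\ell-1$ per part via non-applicability of operations (1) and (3), and deduce d-/c-equivalence of the remainder from non-applicability of (2) and (4). The only (immaterial) difference is that the paper applies pigeonhole first to get a part of size $2k+1$ and then subtracts $\ell-1$, whereas you subtract $2(\ell-1)$ from $4k+1$ and pigeonhole at the end; both yield the required $k+2$.
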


\begin{proof}
There are at least $2k+1$ isolated vertices or a clique of size at least $2k+1$ in the graph spanned by the first class of the partition.

Suppose there are $2k+1$ isolated vertices. Since operation (1) is not possible, there are at most $\ell-1$ isolated vertices with $\sigma_1(v)<\infty$, so we have a set $W$ of at least $2k+1-(\ell-1)\geq k+2$ isolated vertices with $\sigma_1(w)=\infty$. The partition cannot be improved using operation (2), thus $N(v)\cap P_i$ is the same set for each $w \in W$ ($i = 2,3, \dots \ell$). Consequently the vertices of $W$ are d-equivalent.

If we have a clique of size $2k+1$ instead, we find a set $W'\subseteq P_1$, $|W'|=k+2$ of c-equivalent vertices similarly.
\end{proof}

\begin{remark}
Take a maximal left-aligned partition that cannot be improved using the above operations. For any pair of c-equivalent (or d-equivalent) vertices $u,v \in P_1$ we get that $\sigma_0(u)=\sigma_0(v)=\infty$ (or $\sigma_1(u)=\sigma_1(v)=\infty$).
\end{remark}

\begin{lem}\label{wecandelete}
Let $G$ be a graph and let $k\leq \maxdc(G)$. Let $\mathcal{P}$ be a maximal left-aligned partition of $G$ with  $\ell\leq k$ classes that cannot be improved by any of the above operations. If there is a set $W$ of $t\ge k+2$ c-equivalent or d-equivalent vertices in the first class $P_1 \in \mathcal{P}$, then for any $W'\subseteq W$, $0 \equiv |W'|\le t-k$, we have $\maxdc(G-W')=\maxdc(G)$. 
\end{lem}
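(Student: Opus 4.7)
The plan is to establish $\maxdc(G-W')=\maxdc(G)$ by proving the two inequalities separately. For $\maxdc(G)\geq\maxdc(G-W')$, take an optimal partition $\mathcal{Q}'=Q'_1,\ldots,Q'_m$ of $G-W'$ and insert all of $W'$ into $Q'_1$. Every $w\in W'$ contributes exactly $|N\cap Q'_i|$ to the cut $d(\bigcup_{j<i}Q'_j,Q'_i)$ in the d-equivalent case (with an analogous expression involving the internal $W$-edges in the c-equivalent case), so the total parity change is $|W'|\cdot|N\cap Q'_i|\equiv 0\pmod 2$. Every cut parity is therefore preserved and the resulting partition of $G$ is good with $m$ classes.

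For $\maxdc(G-W')\geq\maxdc(G)$, fix an optimal partition $\mathcal{Q}=Q_1,\ldots,Q_m$ of $G$ with $m=\maxdc(G)$ classes and $b_i=|W\cap Q_i|$. The key technical tool is a \emph{2-move lemma}: transferring any two equivalent vertices from one class of $\mathcal{Q}$ to another yields a new good partition. This is verified by direct parity computation, using that the two transferred vertices contribute the same quantity $|N\cap Q_{i'}|$ (or $|N\cap\bigcup_{j<i'}Q_{j'}|$) to every affected cut, so that changes occur only in even multiples; in the c-equivalent case the extra terms involving $|W\cap Q_{i'}|$ likewise occur in pairs. By iterated 2-moves one can consolidate $W$ into a single class $Q_{i_0}$ up to an unavoidable parity residue, yielding $b'_{i_0}\geq t-(R-r_{i_0})$ where $R=\sum_i(b_i\bmod 2)$ and $r_{i_0}=b_{i_0}\bmod 2$. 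Once $b'_{i_0}\geq|W'|$ is achieved, the transitive action of the automorphisms of $G$ on $W$ lets us assume $W'\subseteq Q_{i_0}$, and deleting $W'$ gives a good partition of $G-W'$ with $m$ classes by exactly the parity computation from the easy direction.

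The main obstacle is guaranteeing $R-r_{i_0}\leq t-|W'|$, i.e.\ that some optimal partition admits a sufficiently light parity vector. The slack $|W'|\leq t-k$ combined with $\ell\leq k\leq m$ leaves room, and I would exploit the hypothesis that $\mathcal{P}$ cannot be improved by operations (1)--(4), together with the fact that $\mathcal{P}$ itself concentrates all of $W$ in $P_1$, to rule out optimal partitions whose parity pattern on $W$ is too dispersed: any such dispersion should translate into a local improvement of $\mathcal{P}$, contradicting its stability. The core difficulty is turning this heuristic into a rigorous case analysis on the parity vector $(r_i)\in\mathbb{F}_2^m$ and the interaction of $W$ with the remaining classes; this is where the bulk of the work lies, and it is precisely the point at which the quantitative hypotheses $t\geq k+2$ and $|W'|\leq t-k$ are used.
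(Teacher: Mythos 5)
Your easy direction ($\maxdc(G)\ge\maxdc(G-W')$) is correct and is essentially the paper's: reinserting the even-sized set $W'$ of mutually equivalent vertices into the first class of an optimal partition of $G-W'$ changes every cut $d(\cup_{j<i}Q'_j,Q'_i)$ by $|W'|$ times a quantity that is the same for every vertex of $W'$, hence by an even number. The problem is the other direction, where you yourself flag that ``the bulk of the work lies'' in a case analysis you do not carry out: bounding how dispersed $W$ can be over the classes of an optimal partition of $G$ so that your 2-moves can concentrate $|W'|$ equivalent vertices into one class. As written this is a genuine gap, not a deferred technicality, and the route you propose for closing it (deriving a contradiction with the non-improvability of $\mathcal{P}$ from a ``too dispersed'' parity vector) is not how the bound is obtained --- the paper's proof of this lemma does not use the non-improvability hypothesis at all in this step.

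The missing idea is short. By Theorem \ref{leftalign} one may take the optimal partition $(R_1,\dots,R_{\maxdc(G)})$ of $G$ to be \emph{left-aligned}, so every class after the first has one or two vertices. A two-element class $\{u,v\}$ with $u,v$ c-equivalent or d-equivalent would have $d(\cup_{j<i}R_j,\{u,v\})=2\,d(\cup_{j<i}R_j,u)$, an even cut, contradicting goodness; hence each $R_i$ with $i\ge 2$ contains at most one vertex of $W$, and (reducing to $t=k+2$, $|W'|=2$, as the paper does) at least $t-k+1\ge 3$ vertices of $W$ lie in $R_1$. Swapping equivalent vertices --- a graph automorphism, as you also note --- places $W'$ inside $R_1$, and deleting it from $R_1$ leaves a good partition of $G-W'$ on $\maxdc(G)$ classes, the first class remaining non-empty because a third equivalent vertex survives. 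This observation replaces your consolidation-by-2-moves, the parity vector $(r_i)$, and the entire analysis you defer. Your 2-move lemma is true (the parity computation checks out), but on its own it cannot close the gap, because it gives no control over the number of classes holding an odd number of $W$-vertices; left-alignment is what makes that number trivially bounded.
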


\begin{proof}
Let $(R_1,R_2, \dots, R_{\maxdc(G)})$ be a left-aligned optimal partition of $G$. It is enough to prove the statement for $t=k+2$, for larger values of $t$ this argument can be repeated).  Observe that (for $i>1$) $R_i$ 
cannot contain a pair of d-equivalent or c-equivalent vertices, since $d(\cup_{j=1}^{i-1} R_j, R_i)$ 
 is odd. Thus $|W\cap R_1|\ge t-k+1\ge 3$, and as  we can swap equivalent vertices, we may suppose that $w_1, w_2\in W\cap R_1$. 
By deleting $w_1$ and $w_2$ we get a graph $G'$ for which the partition $(R_1-\{w_1,w_2\},R_2, \dots, R_{\maxdc(G)})$ is good, thus $\maxdc(G') \geq \maxdc(G)$. (We needed 3 equivalent points in the first class so that the class remains non-empty after the deletion.)

We remained 
 to show that $\maxdc(G')\leq \maxdc(G)$. For that take a left-aligned optimal partition of $G'$. If we put back the vertices $w_1,w_2$ into the first class, 
we get a good partition of $G$, since the parity of vertex degrees 
 is unchanged.
\end{proof}

Note that after the deletion of such a vertex set $W'$ from the first class of $\mathcal{P}$ we get a good $\ell$-partition of $G-W'$.

\begin{thm}
Given $G(V,E)$ and $k$ as input, there is an algorithm that either finds a good $k$-partition or it
yields 
a subset $K \subseteq V$ with the following properties:
\begin{enumerate}
\item[(1)] $|K| \leq 6k$
\item[(2)] $\maxdc(G[K]) \geq k\;$ if and only if $\;\maxdc(G) \geq k$
\item[(3)] For any good $k$-partition of $G[K]$ the partition obtained by adding the vertices of $V-K$ to the first class we get a good $k$-partition of $G$.
\end{enumerate}
The algorithm runs in $O(k^2n^2)$ time.
\end{thm}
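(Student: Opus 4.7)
The plan is to combine the Greedy algorithm with the four improvement operations (1)--(4) introduced above, and then to invoke Lemma \ref{wecandelete} to trim excess equivalent vertices from the first class. First I would run the Greedy algorithm to obtain a maximal left-aligned partition $(P_1,\dots,P_\ell)$ in $O(kn^2)$ time, stopping immediately if $\ell\ge k$ since we would already have the desired good $k$-partition. Next I would repeatedly try to apply operations (1)--(4); each success raises $\ell$ by one, each attempted pass costs $O(\ell n^2)=O(kn^2)$, and at most $k$ successes can occur, so this phase finishes in $O(k^2n^2)$ and either produces a good $k$-partition or leaves us with some $\ell\le k-1$ and a partition that admits no further improvement.

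In the latter case I would read off the structure of $P_1$ forced by non-applicability. Recall that $G[P_1]$ decomposes into an odd clique plus isolated vertices. Non-applicability of (1) means that for each $s\in\{2,\dots,\ell\}$ at most one isolated vertex of $P_1$ can have $\sigma_1=s$; the symmetric statement coming from (3) gives at most $\ell-1$ clique vertices with $\sigma_0<\infty$. Non-applicability of (2) says that all isolated vertices with $\sigma_1=\infty$ share the same neighborhood in every $P_s$, and since they have no neighbors inside $P_1$, they form a single d-equivalence class $W_d$ of $G$. By the same token, (4) yields a single c-equivalence class $W_c$ of all clique vertices with $\sigma_0=\infty$. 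Altogether $|P_1|\le 2(\ell-1)+|W_d|+|W_c|$.

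Now I would apply Lemma \ref{wecandelete} first to $W_d$ and then to $W_c$: if $|W_d|\ge k+2$, delete a maximal even-sized subset $W_d'\subseteq W_d$ with $|W_d'|\le |W_d|-k$ (leaving $|W_d\setminus W_d'|\in\{k,k+1\}$), and do the same for $W_c$. Because d- and c-equivalent vertices have identical external neighborhoods, removing an even number of them changes every other vertex's degree by an even amount, hence preserves the parity of every cross-cut $d(Q_{i-1},P_i)$, the Eulerian property of $G[P_1]$, and every $\sigma$-value of surviving vertices; the partition on the reduced graph is therefore still maximal left-aligned and operations (1)--(4) remain inapplicable, so the hypothesis of Lemma \ref{wecandelete} holds on its second invocation. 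Setting $K=V\setminus(W_d'\cup W_c')$ we get $|P_1\cap K|\le 2(\ell-1)+2(k+1)\le 4k-2$ and $|K|\le (4k-2)+2(\ell-1)\le 6k-6$, establishing (1). Property (2) follows from the two applications of Lemma \ref{wecandelete}. For (3), given any good $k$-partition $(R_1,\dots,R_k)$ of $G[K]$, each vertex in $W_d'$ (respectively $W_c'$) contributes the same number of edges to $R_i$; since $|W_d'|$ and $|W_c'|$ are even, $d_G(W_d'\cup W_c',R_i)\equiv 0$, so adding $V\setminus K$ to $R_1$ preserves every cross-cut parity. The equivalence classes can be computed and the deletions carried out in $O(n^2)$, so the total running time is $O(k^2n^2)$.

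The main obstacle is the structural analysis in the second paragraph: one must verify that the non-applicability of the four operations confines the ``large'' part of $P_1$ to a single d-equivalence class together with a single c-equivalence class, outside of $O(k)$ exceptional vertices. Once this structural dichotomy is in place, the size bound $|K|\le 6k$, the preservation of $\maxdc$, and the extendibility of any good $k$-partition of $G[K]$ follow mechanically from Lemma \ref{wecandelete} and the parity calculation above.
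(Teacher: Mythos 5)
Your proposal is correct and follows essentially the same route as the paper: run the \verb|Greedy algorithm|, apply the improvement operations until either $\ell\ge k$ or no operation applies, and then use the c-/d-equivalence classes in $P_1$ together with Lemma \ref{wecandelete} to shrink the instance to a kernel of size $O(k)$. Your version is in fact somewhat more explicit than the paper's, since you unfold the structural argument behind Lemma \ref{thereareequivs} and supply the parity computation verifying property (3), which the paper's proof leaves implicit.
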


\begin{proof}
We make an $\ell_0$-part maximal left-aligned partition using the \verb|Greedy algorithm|. After that we can improve the partition using the improvement operations on page \pageref{opspage}, resulting in a partiton of $\ell \geq \ell_0$ classes.
If $\ell\geq k$, then we can obtain a good $k$ partiton (by contracting the first few classes if needed). The algorithm writes this partition to the output.
If $\ell<k$, and $|V|>6k$, then the first partition has at least $6k-2(\ell-1)>4k+1$ vertices. By Lemma \ref{thereareequivs} we have a c- or d-equivalence class of size at least $k+2$. Lemma \ref{wecandelete} allows us to delete vertices from the first class, and thereby decrease its size under $4k+1$. The new vertex set $K\subsetneq V$ that we obtained has at most $6k$ vertices, and $\maxdc(G[K])=\maxdc(G)$.

Now we estimate the time needed for the computations. Finding a maximal partition using the \verb|Greedy algorithm| can be done in $O(kn^2)$ time. After this we make the possible improvements from page \pageref{opspage}. All improvements can be done in $O(k^2n^2)$ time (since we can do at most $k$ improvements, and each of them takes $O(kn^2)$ time). In the end we delete a set of vertices, which takes no more than $O(n^2)$ time.
\end{proof}

The graph spanned by $K$ is called the \emph{kernel} of the problem. For the kernel we can run Algorithm \verb|Recursive k-dual-critical|.

\begin{cor}
There is a fixed parameter tractable (FPT) algorithm, that given $G$ and $k$, computes a good $k$-partition, or concludes that none exists. The algorithm runs in $O(k^2n^2+(6k)^{2k})$ time. 
\end{cor}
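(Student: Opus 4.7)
The plan is to combine the kernelization theorem proved immediately above with the \verb|Recursive k-dual-critical| procedure. Given an input $(G,k)$, I would first invoke the kernelization algorithm. This either outputs a good $k$-partition of $G$ directly---in which case we report it and halt---or it returns a vertex set $K\subseteq V$ with $|K|\le 6k$ satisfying properties (1)--(3), all in $O(k^2 n^2)$ time.

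In the latter case, I would then run \verb|Recursive k-dual-critical| on the instance $(k,G[K])$. Since $|V(G[K])|\le 6k$, the generic $O(n^{2k})$ bound on that procedure specialises to $O((6k)^{2k})$. If this call returns true, a routine bookkeeping modification (recording the odd-degree vertex or vertex pair peeled off at every recursive layer) simultaneously yields a witnessing good $k$-partition of $G[K]$; by property (3) of the kernelization theorem, placing every vertex of $V-K$ into the first class of this partition produces a good $k$-partition of $G$. If instead the call returns false, then $\maxdc(G[K])<k$, so by property (2) we conclude $\maxdc(G)<k$ and correctly report that no good $k$-partition of $G$ exists.

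Summing the two stages gives total running time $O(k^2 n^2 + (6k)^{2k})$, which is FPT in $k$. There is essentially no obstacle: the kernelization theorem has already done the nontrivial work of compressing $G$ to a $6k$-vertex instance, and the brute-force recursive routine is exactly the tool needed to resolve the kernel. The only point worth a sentence of justification is that \verb|Recursive k-dual-critical| as stated returns only a Boolean, but this is harmless---at each recursive step it commits to a concrete next class (either $\{v\}$ or $\{v,w\}$), so these commitments can simply be logged and emitted along the successful branch.
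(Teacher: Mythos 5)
Your proposal is correct and follows exactly the route the paper intends: apply the kernelization theorem to either output a good $k$-partition or shrink the instance to a kernel $K$ with $|K|\le 6k$, then resolve the kernel with \texttt{Recursive k-dual-critical} in $O((6k)^{2k})$ time, using properties (2) and (3) to transfer the answer back to $G$. Your added remark on extracting the witness partition from the Boolean recursion is a sensible (and correct) point of detail that the paper leaves implicit.
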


\begin{cor}
Using the above algorithm repeatedly for $k=1,2,\dots, \maxdc(G)+1$ it is possible to compute $\maxdc(G)$ in $O(n^4+n^2(6\maxdc(G))^{2\maxdc(G)})$ time.
\end{cor}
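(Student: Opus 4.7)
The plan is to call the FPT algorithm from the previous corollary in succession with $k = 1, 2, 3, \ldots$, stopping at the first value $k^{\star}$ for which the algorithm reports that $G$ has no good $k$-partition; we then return $\maxdc(G) = k^{\star} - 1$.

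Correctness follows from two facts already established in this section. First, every graph is trivially $1$-dual-critical (the one-class partition $\{V\}$ works), so the call with $k=1$ always succeeds. Second, $k$-dual-criticality is monotone: if $G$ is $k$-dual-critical, it is also $\ell$-dual-critical for every $1 \le \ell \le k$, because one can merge the first $k-\ell$ classes of a good $k$-partition into a single class. Consequently the FPT algorithm succeeds exactly for $k \le \maxdc(G)$ and fails exactly at $k = \maxdc(G)+1$, so we make precisely $\maxdc(G)+1$ invocations and the returned value is correct.

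For the running time, sum the per-iteration cost $O(k^2 n^2 + (6k)^{2k})$ over $k = 1, \ldots, \maxdc(G)+1$. The polynomial part contributes $O(\maxdc(G)^3 n^2)$, and the exponential part is dominated by its last summand $(6(\maxdc(G)+1))^{2(\maxdc(G)+1)}$. Both pieces fit into the claimed closed form: when $\maxdc(G) \le n^{2/3}$, the polynomial part is at most $O(n^4)$; when $\maxdc(G) > n^{2/3}$, the quantity $(6\maxdc(G))^{2\maxdc(G)}$ already dwarfs any polynomial in $n$, so $\maxdc(G)^3 n^2$ is absorbed into $n^2 (6\maxdc(G))^{2\maxdc(G)}$. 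A short comparison of $(6(M+1))^{2(M+1)}$ with $n^2 (6M)^{2M}$ for $M = \maxdc(G)$ handles the shift from the last iterate to the expression in the statement, with the factor of $n^2$ easily swallowing the increment in both base and exponent.

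I do not expect any conceptual obstacle: the argument is a direct iteration of the preceding corollary and the difficulty is only in the runtime bookkeeping, specifically verifying by a small case split on the size of $\maxdc(G)$ relative to $n$ that each summand fits inside $O(n^4 + n^2 (6\maxdc(G))^{2\maxdc(G)})$.
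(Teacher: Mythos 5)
Your proposal is correct and matches the paper's intended argument: the corollary's own statement already prescribes running the FPT algorithm for $k=1,2,\dots,\maxdc(G)+1$, correctness follows from the monotonicity of $k$-dual-criticality established earlier in the section, and your runtime bookkeeping (including the observation that $(6(M+1))^{2(M+1)}\le e^2(6(M+1))^2(6M)^{2M}=O(n^2(6M)^{2M})$ since $M+1\le n$) is sound. No gaps.
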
 

\end{document}